\newcommand{\diverge}{\to\infty}
\newcommand{\ones}{\mathbf 1}
\newcommand{\zeros}{\mathbf 0}
\newcommand{\reals}{{\mathbb{R}}}
\newcommand{\pth}[1]{\left( #1 \right)}
\newcommand{\norm}[1]{\left\|{#1} \right\|}
\newcommand{\iprod}[2]{\left \langle #1, #2 \right\rangle}
\newcommand{\tx}{{\widetilde{x}}}
\newcommand{\calA}{{\mathcal{A}}}
\newcommand{\calE}{{\mathcal{E}}}
\newcommand{\calF}{{\mathcal{F}}}
\newcommand{\calH}{{\mathcal{H}}}
\newcommand{\calI}{{\mathcal{I}}}
\newcommand{\calN}{{\mathcal{N}}}
\newcommand{\calV}{{\mathcal{V}}}
\newcommand{\argmin}{{\rm argmin}}
\begin{document}

\title{Byzantine Multi-Agent Optimization--Part II\thanks{This research is supported in part by National Science Foundation awards NSF 1329681 and 1421918.
Any opinions, findings, and conclusions or recommendations expressed here are those of the authors
and do not necessarily reflect the views of the funding agencies or the U.S. government.}}

\author{Lili Su \hspace*{1in} Nitin Vaidya}
\institute{Department of Electrical and Computer Engineering, and\\
Coordinated Science Laboratory\\
University of Illinois at Urbana-Champaign\\
Email:\{lilisu3, nhv\}@illinois.edu}
\maketitle

\begin{center}
Technical Report\\
~\\

July 2015
\end{center}
~

\centerline{\bf Abstract}

In Part I of this report, we introduced a Byzantine fault-tolerant distributed optimization problem whose goal is to optimize a sum of convex (cost) functions with real-valued scalar
input/ouput. In particular, the goal is to optimize
a global cost function $\frac{1}{|\calN|}\sum_{i\in \calN} h_i(x)$, where $\calN$ is the set of non-faulty
agents, and $h_i(x)$ is agent $i$'s local cost function, which is initially
known only to agent $i$. In general,
when some of the agents may be Byzantine faulty, the above goal is unachievable. 
Therefore, in Part I, we studied a weaker version of the problem whose goal is to generate an output that is an optimum of a function formed as
a {\em convex combination} of local cost functions of the non-faulty agents.
We showed that the maximum achievable number of weights ($\alpha_i$'s) that are bounded away from 0 is $|\calN|-f$, where $f$ is the upper bound on the number of Byzantine agents.\\

In this second part, we introduce a condition-based variant of the original problem over arbitrary directed graphs. Specifically, for a given collection of $k$ input functions $h_1(x), \ldots, h_k(x)$, we consider the scenario when the local cost function stored at agent $j$, denoted by $g_j(x)$, is formed as a {\em convex combination} of the $k$ input functions $h_1(x), \ldots, h_k(x)$. The goal of this condition-based problem is to generate an output that is an optimum of $\frac{1}{k}\sum_{i=1}^k h_i(x)$.
Depending on the availability of side information at each agent, two slightly different variants are considered.  We show that for a given graph, the problem can indeed be solved despite the presence of faulty agents. In particular, even in the absence of side information at each agent, when adequate {\em redundancy} is available in the optima of input functions, a distributed algorithm is proposed in which each agent carries minimal state across iterations.

\begin{keywords}
Distributed optimization; Byzantine faults; incomplete networks; fault-tolerant computing
\end{keywords}


\section{System Model and Problem Formulation}
The system under consideration is synchronous, and consists of $n$ agents connected by an arbitrary directed communication network $G(\calV,\calE)$, where $\calV=\{1,\dots,n\}$ is the set of $n$ agents, and $\calE$ is the set of directed edges between the agents in $\calV$. Up to $f$ of the $n$ agents may be Byzantine faulty.
Let $\calF$ denote the set of faulty agents in a given execution. Agent $i$ can reliably transmit messages to agent $j$ if and only if
the directed edge $(i,j)$ is in $\calE$.
Each agent can send messages to itself as well, however,
for convenience, we {\em exclude self-loops} from set $\calE$.
That is, $(i,i)\not\in\calE$ for $i\in\calV$.
With a slight abuse of terminology, we will use the terms {\em edge}
and {\em link} interchangeably, and use the terms {\em nodes}
and {\em agents} interchangeably in our presentation.

For each agent $i$, let $N_i^-$ be the set of agents from which $i$ has incoming
edges.
That is, $N_i^- = \{\, j ~|~ (j,i)\in \calE\, \}$.
Similarly, define $N_i^+$ as the set of agents to which agent $i$
has outgoing edges. That is, $N_i^+ = \{\, j ~|~ (i,j)\in \calE\, \}$.
Since we exclude self-loops from $\calE$,
$i\not\in N_i^-$ and $i\not\in N_i^+$.
However, we note again that each agent can indeed send messages to itself.
Agent $j$ is said to be an {\em incoming neighbor} of agent $i$,
if $j\in N_i^-$. Similarly, $j$ is said to be an {\em outgoing neighbor}
of agent $i$, if $j\in N_i^+$.\\

We say that a function $h: \mathbb{R}\rightarrow \mathbb{R}$ is {\em admissible} if (i) $h(\cdot)$ is convex and $L$-Lipschitz continuous,
and (ii) the set $\argmin~h(x)$ containing the optima of $h(\cdot)$
is non-empty and compact (i.e., bounded and closed).
Given $k$ admissible input functions $h_1(x), \ldots, h_k(x)$, each agent $i\in \calV$ is initially provided with a local cost function $g_i(\cdot)$ of the form
\begin{align*}
g_i(x)={\bf A}_{1i}h_1(x)+{\bf A}_{2i}h_2(x)+\ldots +{\bf A}_{ki}h_k(x),
\end{align*}
where ${\bf A}_{ji}\ge 0$ and ${\bf A}_{ji}\ge 0$ and $\sum_{j=1}^k {\bf A}_{ji}=1$ for all $i\in \calV$ and all $j=1, \ldots, k$. Compactly, we have ${\bf g}(x)={\bf h}(x){\bf A}$, where ${\bf h}(x)=[h_1(x), h_2(x), \ldots, h_k(x)]$, ${\bf g}(x)=[g_1(x), g_2(x), \ldots, g_n(x)]$ and ${\bf A}\in \reals^{k\times n}$. Our problem formulation is motivated by the work on condition-based consensus \cite{mostefaoui2003using,Chaudhuri92morechoices,friedman2007asynchronous}, where the inputs of the agents are restricted to be within some acceptable set.

Each agent $i$ maintains state $x_i$, with $x_i(t)$ denoting the local estimate of the
optimal $x$, computed by node $i$ at the {\em end}\, of the $t$-th iteration of the algorithm,
with $x_i(0)$ denoting its initial local estimate.
At the {\em start} of the $t$-th iteration ($t>0$), the local estimate of
agent $i$ is $x_i(t-1)$. The algorithms of interest will require each agent $i$
to perform the following three steps in iteration $t$, where $t>0$.
Note that the faulty agents may deviate from this specification.
Since each $h_j(\cdot)$ is convex and $L$-Lipschitz continuous, and $\sum_{j=1}^k {\bf A}_{ji}=1$, it follows that each $g_i(\cdot)$ is also convex and $L$-Lipschitz continuous. Note that the formulation allows $n<k$ as well as $n\geq k$. The matrix $\bf A$ is termed as a {\em job assignment matrix}.
The goal here is to develop algorithms that output $x_i=\tx$ at each non-faulty agent $i$ such that
\begin{align}
\label{goal}
\tx\in \argmin~ h(x)=\frac{1}{k}\sum_{j=1}^{k} h_j(x).
\end{align}
That is, we are interested in developing algorithms in which the local estimate of each non-faulty agent will eventually reach consensus, and the consensus value is an optimum of function $h(\cdot)$.

Let $X_j=\argmin ~h_j(x)$ for all $j=1, \ldots, k$, and let $X=\argmin ~h(x)$.
For ease of future reference, we refer to the above optimization problem \ref{goal} as Problem (\ref{goal}). Problem \ref{goal} is said to be solvable if there exists an algorithm that outputs $\tx\in \argmin~ h(x)$ at each non-faulty agent $i$ for any collection of $k$ admissible functions. 
Problem \ref{goal} can be further formulated differently depending on whether each non-faulty agent $i$ knows the assignment matrix $\bf A$ or not.
We refer to the formulation where the agents know matrix $\bf A$ as condition-based Byzantine multi-agent optimization \textbf{with} side information; otherwise the problem is called condition-based Byzantine multi-agent optimization \textbf{without} side information.\\

Our formulation is more general than the common formulation adopted in \cite{Duchi2012,nedic2015distributed,Nedic2009,ram2010distributed,Tsianos2012,tsitsiklis1986distributed}, in which $f=0$ and the assignment matrix ${\bf A}={\bf I}_k$ (identity matrix) is considered. Despite the elegance of the algorithms proposed in \cite{Duchi2012,nedic2015distributed,Nedic2009,ram2010distributed,Tsianos2012,tsitsiklis1986distributed}, none of these algorithms work in the presence of Byzantine agents when $f\ge 1$ and ${\bf A}={\bf I}_k$. Informally speaking, this is because under ${\bf g}(x)={\bf h}(x){\bf I}_k={\bf h}(x)$ assignment, the information about the input function $h_i(x)$ is exclusively known to agent $i$ in the system. If agent $i$ is faulty and misbehaves, or crashes at the beginning of an execution, then the information about $h_i(x)$ is not accessible to the non-faulty agents. When $\sum_{j=1}^k h_j(x)$ and $\sum_{j=1, j\not=i}^k h_j(x)$ do not have common optima, there does not exist a correct algorithm.
A stronger impossibility result is presented next, which is proved in Part I of our work \cite{SuV15MultiAgentPartI}.

\begin{theorem}\cite{SuV15MultiAgentPartI}
\label{t_imposs_0}
Problem \ref{goal} is not solvable when $f\ge 1$ and ${\bf A}={\bf I}_k$.
\end{theorem}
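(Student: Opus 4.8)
The plan is to construct, for any purported algorithm, two problem instances that are indistinguishable to the non-faulty agents but have disjoint sets of optima, forcing a contradiction. I would begin by fixing an arbitrary agent $i$, and imagining an execution in which agent $i$ is the (single) Byzantine agent, $f=1$. Since ${\bf A}={\bf I}_k$, the information about the input function $h_i(\cdot)$ resides exclusively at agent $i$: no other agent has any a priori knowledge of $h_i$, and all messages agent $i$ ever sends are under adversarial control. The key observation is that a faulty agent $i$ can behave in a way that is completely independent of its actual local function $h_i$ — for instance, it can run the honest protocol but pretend its local function is some fixed admissible function $\hat h$ instead of $h_i$. From the point of view of every other (non-faulty) agent, such an execution is word-for-word identical to an execution in which agent $i$ is honest and genuinely holds $\hat h$.

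The core of the argument is then the following exchange. Consider two instances. In Instance~1, all agents are honest, agent $i$ holds an admissible function $h_i^{(1)}$, and every other agent $j$ holds some fixed admissible $h_j$; here the algorithm is required to output some $\tx \in \argmin \frac{1}{k}(h_i^{(1)} + \sum_{j\neq i} h_j)$. In Instance~2, agent $i$ is Byzantine and its true function is irrelevant, but it executes exactly the same messages it would send as an honest agent holding $h_i^{(2)}$; meanwhile every honest agent $j\neq i$ holds the same $h_j$ as before. The non-faulty agents cannot distinguish Instance~2 from the all-honest instance in which agent $i$ holds $h_i^{(2)}$, so by the correctness requirement they must output some $\tx \in \argmin \frac{1}{k}(h_i^{(2)} + \sum_{j\neq i} h_j)$. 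But the transcripts seen by the non-faulty agents in Instance~1 and Instance~2 can be made identical (the faulty agent in Instance~2 simply replays its Instance-1 behavior, which is legal since it is unconstrained), so the non-faulty agents produce the same output $\tx$ in both. It therefore suffices to exhibit two admissible choices $h_i^{(1)}, h_i^{(2)}$ and honest functions $\{h_j\}_{j\neq i}$ such that $\argmin\frac{1}{k}(h_i^{(1)} + \sum_{j\neq i}h_j)$ and $\argmin\frac{1}{k}(h_i^{(2)} + \sum_{j\neq i}h_j)$ are disjoint; then no single $\tx$ can lie in both, a contradiction. A concrete construction: take all $h_j$ for $j\neq i$ to be $0$ (constant, hence admissible), $h_i^{(1)}(x) = |x|$ (optimum at $0$), and $h_i^{(2)}(x) = |x-1|$ (optimum at $1$); the two averaged functions have optima $\{0\}$ and $\{1\}$ respectively.

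The subtle point — and the main thing to get right rather than the main obstacle — is the indistinguishability bookkeeping: one must argue carefully that the entire view (all received messages across all iterations) of each non-faulty agent is identical in the two executions, which requires that the faulty agent's behavior in Instance~2 is a valid adversarial strategy (it is, precisely because a Byzantine agent may send arbitrary messages, in particular the messages an honest copy holding $h_i^{(1)}$ would send) and that the non-faulty agents' behavior is deterministic given their views (or, if randomized, one couples the randomness). Since the algorithm must be correct for \emph{any} collection of admissible functions, and in particular for the all-honest instances underlying both Instance~1 and Instance~2, the required outputs are pinned down, and the shared transcript forces the same actual output, contradicting disjointness of the two optima sets. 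This is essentially the standard indistinguishability/scenario argument for Byzantine impossibility, specialized to the optimization setting; I would present it cleanly with the explicit $h_i^{(1)}, h_i^{(2)}$ above to make the disjointness transparent, and note that the same construction works verbatim for any $f\geq 1$ by leaving the other $f-1$ faulty agents idle (or honest).
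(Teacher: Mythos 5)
Your overall strategy---a two-execution indistinguishability argument in which a Byzantine agent $i$ simulates an honest agent holding a different admissible function, so that the non-faulty agents' views coincide while the two required optimum sets are disjoint---is exactly the reasoning this paper sketches just before Theorem~\ref{t_imposs_0} and defers to Part~I \cite{SuV15MultiAgentPartI}, and it mirrors the execution-swapping style used in the paper's own necessity proofs (e.g.\ the proof of Theorem~\ref{scsize}). So the route is the right one.

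There is, however, one concrete step that fails as written: your witness functions $h_j\equiv 0$ for $j\ne i$ are \emph{not admissible} under this paper's definition, which requires $\argmin\,h_j$ to be non-empty and \emph{compact}; for the zero function $\argmin\,h_j=\reals$ is unbounded. Since correctness is only demanded for collections of admissible functions, your two instances are not legal instances and the contradiction does not yet follow. The repair is easy but not entirely free, because the honest agents' functions must be flat near both candidate optima without being globally flat: take $h_j(x)=\max\{|x|-1,0\}$ for $j\ne i$, together with $h_i^{(1)}(x)=|x|$ and $h_i^{(2)}(x)=|x-1|$; then all functions are admissible and the two averaged objectives have optimum sets $\{0\}$ and $\{1\}$, which are disjoint. (Choosing, say, $h_j(x)=|x|$ for $j\ne i$ would not work, since then agent $i$'s function can no longer move the optimum.) A second, smaller issue is that your description of Instance~2 is internally inconsistent: you first say the faulty agent mimics an honest holder of $h_i^{(2)}$ and later say it replays its Instance-1 (i.e.\ $h_i^{(1)}$) behaviour. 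The clean version fixes the faulty agent's transmitted messages to be those of an honest holder of $h_i^{(1)}$ while its \emph{true} input function---the one that enters the global objective---is $h_i^{(2)}$; then the common forced output would have to lie in two disjoint sets, which is the desired contradiction.
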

In contrast, function redundancy can be added to the system by applying a properly chosen job assignment matrix ${\bf A}$ to ${\bf h}(x)$. For example, suppose $k=2$, $f=1$ and the optimal sets of functions $h_1(x)$ and $h_2(x)$ are $[-1, 0]$ and $[0, 1]$, respectively. Let ${\bf g}(x)={\bf h}(x) {\bf G}$, where ${\bf G}$ is a generator matrix of a repetition code with $d=2f+1=3$.
Informally speaking, by applying linear code $\bf G$ on input functions ${\bf h}(x)$, i.e., ${\bf g}(x)={\bf h}(x){\bf G}$, the Byzantine agents' ability in hiding information about input functions can be weakened. This observation and Theorem \ref{t_imposs_0} together justify our problem formulation.

\paragraph{{\bf Contributions:}}

We introduce a condition-based approach to Byzantine multi-agent optimization problem. Two slightly different variants are considered: condition-based Byzantine multi-agent optimization with side information and condition-based Byzantine multi-agent optimization without side information. For the former, when side information is available at each agent, a decoding-based algorithm is proposed, assuming that each input function is differentiable. This algorithm combines the gradient method with the decoding procedure introduced in \cite{candes2005decoding} (namely matrix ${\bf A}$). With such a decoding subroutine, our algorithm essentially performs the gradient method, where gradient computation is performed distributedly over the multi-agent system.
When side information is not available at each agent, we propose a simple consensus-based algorithm in which each agent carries minimal state across iterations. This consensus-based algorithm solves Problem \ref{goal} under the additional assumption over input functions that all input functions share at least one common optimum. 

\paragraph{{\bf Organization:}}
The rest of the report is organized as follows. Related work is summarized in Section \ref{sec: related work}.
Condition-based Byzantine multi-agent optimization with side information is analyzed in Section \ref{sec:SolutionIndependent}, where each agent knows the assignment matrix ${\bf A}$. Section \ref{sec:SolutionRedundant} is devoted to the case when each agent does not know ${\bf A}$. Section \ref{sec:summary} concludes the report.

\section{Related Work}\label{sec: related work}

Fault-tolerant consensus \cite{PeaseShostakLamport} is closely related to the optimization problem considered in this report. There is a significant body of work on fault-tolerant consensus, including \cite{Dolev:1986:RAA:5925.5931,Chaudhuri92morechoices,mostefaoui2003conditions,fekete1990asymptotically,LeBlanc2012,vaidya2012iterative,friedman2007asynchronous}.
Two variants that are most relevant to the algorithms in this report are {\em iterative approximate Byzantine consensus} \cite{fekete1990asymptotically,LeBlanc2012,vaidya2012iterative} and {\em condition-based consensus} \cite{mostefaoui2003using,Chaudhuri92morechoices,friedman2007asynchronous}. Iterative approximate consensus requires that the agents agree with each other only approximately, using local communication and maintaining {\em minimal} state across iterations. Condition-based consensus \cite{mostefaoui2003using} restricts the inputs of the agents to be within some acceptable set. \cite{Chaudhuri92morechoices} showed that if a condition (the set of allowable system inputs) is $f$--{\em acceptable}, then consensus can be achieved in the presence of up to $f$ crash failures over complete graphs.
 A connection between asynchronous consensus and error-correcting codes (ECC) was established in \cite{friedman2007asynchronous}, observing that crash failures and Byzantine failures correspond to erasures and substitution errors, respectively, in ECCs. Condition-based approach can also be used in synchronous system to speed up the agreement \cite{Mostefaoui2006,mostefaoui2003using}. 

Convex optimization, including distributed convex optimization, also has a long history \cite{bertsekas1989parallel}. 
Primal and dual decomposition methods that lend themselves naturally to a distributed paradigm are well-known \cite{Boyd2011}. 
There has been significant research on a variant of distributed optimization problem \cite{Duchi2012,nedic2015distributed,Nedic2009,ram2010distributed,Tsianos2012,tsitsiklis1986distributed}, in which the global objective $h(x)$ is a summation of $n$ convex functions, i.e, $h(x)=\sum_{j=1}^n h_j(x)$, with function $h_j(x)$ being known to the $j$-th agent. The need for robustness for distributed optimization problems has received some attentions recently \cite{Duchi2012,nedic2015distributed,ram2010distributed}.
In particular, Ram et al.\ \cite{ram2010distributed} studied the scenario when each component function is known partially (with stochastic errors) to an agent, Duchi et al.\ \cite{Duchi2012} and Nedic et al.\ \cite{nedic2015distributed} investigated the impact of random communication link failures and time-varying communication topology.  Duchi et al.\ \cite{Duchi2012}  assumed that each realizable link failure pattern considered in \cite{Duchi2012} is assumed to admit a doubly-stochastic matrix which governs
the evolution dynamics of local estimates of the optimum. The doubly-stochastic requirement is relaxed in \cite{nedic2015distributed}, using the push-sum technique used in \cite{Tsianos2012}.  In contrast, we consider the system in which up to $f$ agents may be Byzantine, i.e., up to $f$ agents may be adversarial and try to mislead the system to function improperly. We are not aware of the existence of results obtained in this report.

In other related work, significant attempts have been made to solve the problem of distributed hypothesis testing in the presence of Byzantine attacks \cite{kailkhura2015consensus,zhang2014distributed,marano2009distributed}, where Byzantine sensors may transmit fictitious observations aimed at confusing the decision maker to arrive at a judgment that is in contrast with the true underlying distribution. Consensus based variant of distributed event detection, where a centralized data fusion center does not exist, is considered in \cite{kailkhura2015consensus}. In contrast, in this paper, we focus on the Byzantine attacks on the multi-agent optimization problem.

\section{Condition-based Byzantine multi-agent optimization with side information}
\label{sec:SolutionIndependent}
In this section we consider condition-based Byzantine multi-agent optimization with side information, where each agent knows the assignment matrix ${\bf A}$.  Let $\{\alpha(t)\}_{t=0}^{\infty}$ be a sequence of step sizes. A simple decoding-based algorithm, Algorithm 1, formally presented below, works in an iterative fashion. Recall that $x_i(0)$ is the initial state of local estimate for each non-faulty agent $i\in \calV-\calF$, and $G(\calV, \calE)$ is the underlying communication graph. Without loss of generality, we assume that $x_i(0)=x_0$ for $i\in \calV-\calF$ and some arbitrary but fixed $x_0\in \reals$. Otherwise,  we can add an additional initialization step to guarantee identical  ``initial state" using an arbitrary exact consensus algorithm.
Let $x_i(t)$ be the local estimate of an
optimum in $X$, computed by node $i$ at the {\em end}\, of the $t$-th iteration of the algorithm.
At the {\em start} of the $t$-th iteration ($t>0$), the local estimate of
agent $i$ is $x_i(t-1)$.

For Algorithm 1 to work, we assume that each input function $h_i(\cdot)$ is differentiable. Consequently, the local objective $g_i(\cdot)$ is also differentiable for each $i\in \calV$. Let ${\bf A}\in \reals^{k\times n}$ be a matrix that can corrects up to $f$ arbitrary entry-wise errors in \cite{candes2005decoding}.
At iteration $t$, each non-faulty agent $i$ computes the gradient of $g_i(t)$ at $x_i(t-1)$. Let ${\bf d}(t)$ be the $k$-dimensional vector of the gradients of the $k$ input functions at $x_i(t-1)$, where $i\in \calV-\calF$. For the $j$--th entry in ${\bf d}(t)$, i.e, ${\bf d}_j(t)$,  it holds that ${\bf d}_j(t)=h^{\prime}_j(x_i(t-1))$.   Later we will show that $x_i(t-1)=x_j(t-1)$ for all $i, j\in \calV-\calF$. Thus ${\bf d}(t)$ is well-defined.  In addition, we assume the structure of the underlying graph $G(\calV, \calE)$ admits Byzantine broadcast. For instance, when $G(\calV, \calE)$ is undirected, for a correct Byzantine broadcast algorithm to exist, node connectivity of $G(\calV, \calE)$ is at least $2f+1$.

\paragraph{}
\vspace*{8pt}\hrule

{\bf Algorithm 1: }
\vspace*{4pt}\hrule

~

Steps to be performed by agent $i\in \calV$ in iteration $t\ge 0$.\\

{\bf Initialization: } $x_i(0)\gets x_0$.

\begin{enumerate}

\item {\em Transmit step:} Compute $g_i^{\prime}(x_i(t-1))$,  the gradient of $g_i(\cdot)$ at $x_i(t-1)$, and perform Byzantine broadcast of $g_i^{\prime}(x_i(t-1))$ to all agents.\\

\item {\em Receive step:} Receive gradients from all other agents. Let ${\bf y}^i(t)$ be a $n$--dimensional vector of received gradients, with ${\bf y}^i_j(t)$ being the value received from agent $j$. If $j\in \calV-\calF$, then ${\bf y}^i_j(t)=g_j^{\prime}(x_j(t-1))$.\\

\item {\em Gradient Decoding step:}
Perform the decoding procedure in \cite{candes2005decoding} to recover
$${\bf d}(t)=[h_1^{\prime}(x_i(t-1)), \cdots, h_k^{\prime}(x_i(t-1))]^{T}.$$

\item \textit{Update step:} Update its local estimate as follows.
    \begin{align}
    \label{update decoding}
    x_i(t)=x_i(t-1)-\alpha(t-1) \sum_{j=1}^k h_j^{\prime}(x_i(t-1)).
    \end{align}

\end{enumerate}
\hrule

~

At iteration $t=1$, each non-faulty agent $i$ computes $g_i^{\prime}(x_i(0))$--the gradient of $g_i(\cdot)$ at the current estimate $x_i(0)=x_0$, and performs Byzantine broadcast of $g_i^{\prime}(x_0)$.  Note that a faulty agent $p$, instead of $g_p^{\prime}(x_0)$, may perform Byzantine broadcast of some arbitrary value to other agents.
Recall that ${\bf y}^i(1)\in \reals^n$ is a $n$--dimensional real vector, with ${\bf y}^i_j(1)$ be the value received from agent $j$ at iteration $1$.  Since ${\bf g}(\cdot)={\bf h}(\cdot){\bf A}$, then we can write ${\bf y}^i(1)$ as ${\bf y}^i(1)={\bf d}(1) {\bf A}+{\bf e}^i(1)$, where ${\bf e}^i(1)$ corresponds to the errors induced by the faulty agents. Let $p$ be a nonzero entry in ${\bf e}^i(1)$, it should be noted that ${\bf e}_p^i(1)$ can be  arbitrarily away from 0.  Since messages/values are transmitted via Byzantine broadcast, it holds that ${\bf e}^i(1)={\bf e}^{i^{\prime}}(1)$ for all $i, i^{\prime}\in \calV-\calF$. Consequently, we have ${\bf y}^i(1)={\bf y}^{i^{\prime}}(1)$ for all $i, i^{\prime}\in \calV-\calF$. For each $i\in \calV-\calF$, ${\bf d}(1)$ can be recovered using the decoding procedure in \cite{candes2005decoding}. By the updating function
(\ref{update decoding}), we know $x_i(1)=x_j(1)$ for all $i,j \in \calV-\calF$. Inductively, it can be shown that $x_i(t)=x_j(t)$ for all $i,j \in \calV-\calF$, and for all $t\ge 0$. Thus, ${\bf d}(t)$ is well-defined for each $t\ge 0$.
%
%
%
The remaining correctness proof of Algorithm 1 follows directly from the standard gradient method convergence analysis for convex objective.

Due to the use of Byzantine broadcast, the communication load in Algorithm 1 is high. The communication cost can be reduced by using a matrix ${\bf A}$ that has stronger error-correction ability. 
In general, there is some tradeoff among the communication cost, the graph structure and the error-correcting capability of ${\bf A}$. Our main focus of this paper is the case when no side-information is available at each agent, thus we do not pursue this tradeoff further.


\section{Condition-based Byzantine multi-agent optimization without side information}
\label{sec:SolutionRedundant}
In this section, we consider the scenario when side information about the assignment matrix ${\bf A}$ is not known to each agent.
We will classify the collection of input functions into three classes depending on the level of redundancy in the input function solutions.
For functions with adequate redundancy in their optima,  a simple consensus-based algorithm, named Algorithm 2, is proposed. Although Algorithm 2, at least in its current form, only works for a restricted class of input functions, it is more efficient in terms of both memory and local computation, compared to Algorithm 1. We leave the adaptation of Algorithm 2 to the general input functions as future work.

The job assignment matrices used in this section are characterized by {\em sparsity parameter}--a new property (introduced in this report) over matrices. 

\subsection{Classification of input functions collections}
Recall that protective function redundancy is added to the system by applying a proper matrix ${\bf A}$ to ${\bf h}(\cdot)$, i.e., ${\bf g}(\cdot)={\bf h}(\cdot){\bf A}$. In Algorithm 1 sufficient redundancy is added to the system such that Algorithm 1 works for any collection of input functions.  However, for some collection of input functions, such function redundancy may not be necessary.  Consider the case when all $k$ input functions are strictly convex and have the same optimum, i.e., $X_i=\{x^*\}$ for some $x^*$ and for all $i=1, \ldots, k$. In addition, the agents know that $X_i=X_j$ and $|X_i|=1$ for all $i,j\in \calV$. It can be checked that $h(x)=\frac{1}{k}\sum_{j=1}^k h_j(x)$ is also strictly convex and $X=\{x^*\}$.
Even if there is no redundant agents in the system and no redundancy added when applying $\bf A$, i.e., ${\bf A}={\bf I}_k$, Problem \ref{goal} can be solved trivially by requiring each non-faulty agent to minimize its own local objective $h_i(x)$ individually without exchanging any information with other agents.

Informally speaking, as suggested by the above example, the optimal sets of the given input functions may themselves have redundancy. For ease of further reference, we term this redundancy as \emph{solution redundancy}. Closer examination reveals that the collections of input functions can be categorized into three classes according to solution redundancy.
\begin{itemize}
\item[Case 1:] The $k$ input functions are strictly convex and $X_i=\{x^*\}$ for all $i=1, \ldots, k$, and the agents know that $X_i=X_j$ and $|X_i|=1$ for all $i,j\in \calV$;
\item[Case 2:] The $k$ input functions share at least one common optimum, i.e., $\cap_{i=1}^k X_i\not=\O$, and the agents know that $\cap_{i=1}^k X_i\not=\O$;
\item[Case 3:] The $k$ input functions share no optima, i.e., $\cap_{i=1}^k X_i=\O$.
\end{itemize}
If the collection of $k$ input functions belongs to Case 1 or Case 2, we refer to this scenario as \emph{solution-redundant} functions; similarly, we refer to the collection of $k$ functions that falls within Case 3 as \emph{solution-independent} functions.
When the given collection of input functions fits Case 2 or Case 3 (but not Case 1), information exchange among agents is in general required in order to achieve asymptotic consensus over local estimates of non-faulty agents.\\

In this section, we are particularly interested in the family of algorithms of the following structure.

\subsection{Algorithm Structure}
\label{subsec:algorithm}
Recall that each agent $i$ maintains state $x_i$, with $x_i(t)$ denoting the local estimate of an
optimum in $X$, computed by node $i$ at the {\em end}\, of the $t$-th iteration of the algorithm,
with $x_i(0)$ denoting its initial local estimate.
At the {\em start} of the $t$-th iteration ($t>0$), the local estimate of
agent $i$ is $x_i(t-1)$. The algorithms of interest will require each agent $i$
to perform the following three steps in iteration $t$, where $t>0$.
Note that the faulty agents may deviate from this specification.

\begin{enumerate}
\item {\em Transmit step:} Transmit message $m_i(t)$ on all outgoing edges (to agents in $N_i^{+}$).
\item {\em Receive step:} Receive messages on all incoming edges (from agents in $N_i^{-}$). Denote by $r_i(t)$ the vector of messages received from its neighbors.
\item \textit{Update step:} Agent $i$ updates its local estimate using a transition function $Z_i$, 
\begin{eqnarray}
x_i(t)=Z_i\pth{r_i(t), x_i(t-1), g_i(\cdot)},
\label{eq:Z_i}
\end{eqnarray}
where $Z_i$ is a part of the specification of the algorithm.
\end{enumerate}

The evolution of local estimate at agent $i$ is governed by the update function defined in (\ref{eq:Z_i}). Note that $x_i(t)$ only depends on local objective $g_i(\cdot)$, $x_i(t-1)$ and $r_i(t)$--the messages collected by agent $i$ in the receive step of iteration $t$. No other information collected in any of the previous iteration will affect the update step in iteration $t$. Intuitively speaking, non-faulty agent $i$ is assumed to have no memory across iterations except $x_i$. Note that the information available at each non-faulty node $i\in \calV-\calF$ is the local estimate $x_i(t-1)$ and the local objective $g_i(\cdot)$. Thus, the message $m_i(t)$ is a function of $x_i(t-1)$ and $g_i(\cdot)$ only, i.e., $$m_i=F_i(x_i(t-1), g_i(\cdot)).$$\\

An algorithm is said to be correct (1) if $\lim_{t\diverge}|x_i(t)-x_j(t)|=0 ~ \text{and}~ \lim_{t\diverge} x_j(t)\in X,~\text{for all initial states}~ x_j(0)~\text{and for all}~i,j\in \calV-\calF,$
and (2) if there exists a finite $t_0$ such that $x_i(t_0)=x_j(t_0)$ and $x_i(t_0)\in X$ for all $i,j\in \calV-\calF$, then
$x_i(t)=x_i(t_0)~\text{for all}~ i\in \calV-\calF  ~\text{and for all}~t\ge t_0.$

Case 1 above is a special form of Case 2. For Case 1, where $h_j(x)$'s are strictly convex and have the same optimum, the problem can be solved trivially. However, for Case 2 in general, the redundancy that is necessary
 may depend on the underlying graph structure. Henceforth, we consider the scenario when the input functions falls in Case 2. Note that Theorem \ref{t_imposs_0} still holds when restricting to Case 2 input functions.
Next, we introduce the notion of sparsity parameter of a job assignment matrix, and characterize the tradeoff between the sparsity parameter and the necessary and sufficient condition, for a correct algorithm to exist.

\begin{definition}
Given a job assignment matrix $\bf A$, the sparsity parameter of $\bf A$, denoted by $sp({\bf A})$, is the smallest integer such that the sum vector of any $sp({\bf A})$ columns of $\bf A$ is component-wise positive, i.e., every coordinate of the sum vector is positive. In particular, if the sum vector of all columns of $\bf A$ is not component-wise positive, then $sp({\bf A})\triangleq n+1$ by convention. \end{definition}
Recall that ${\bf A}\ge \zeros$ is a nonnegative matrix, $sp({\bf A})\triangleq n+1$ implies that there exists a row in $\bf A$ that contains only zeros. The following lemma presents a lower bound on the number of nonzero elements in a row of $\bf A$, given that $sp({\bf A})=k^{\prime}$.

\begin{lemma}
\label{sparsity}
Given an assignment matrix $\bf A$, its sparsity parameter $sp({\bf A})=k^{\prime}$ if and only if there are at most $k^{\prime}-1$ zero entries in each row of $\bf A$ and there exists one row that contains exactly $k^{\prime}-1$ zero entries.
\end{lemma}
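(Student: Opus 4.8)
The plan is to reduce the statement to an elementary counting fact about the zero-sets of the rows of $\mathbf A$. For a row index $j\in\{1,\dots,k\}$, write $Z_j=\{i\in\calV : \mathbf A_{ji}=0\}$ for the set of columns whose $j$-th entry vanishes, and let $z_j=|Z_j|$ be the number of zero entries in row $j$. The first thing I would record is that, since $\mathbf A\ge\zeros$, for any set of columns $S\subseteq\calV$ the sum vector $\sum_{i\in S}\mathbf A_{\cdot i}\in\reals^{k}$ has a zero in coordinate $j$ if and only if $\mathbf A_{ji}=0$ for every $i\in S$, i.e.\ if and only if $S\subseteq Z_j$. Hence the sum of the columns indexed by $S$ fails to be component-wise positive exactly when $S\subseteq Z_j$ for some $j$ (a nonnegative vector with no zero coordinate is positive).

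Next I would translate this into a statement about sizes. Fix an integer $s$ with $1\le s\le n$. There exists a set $S$ of $s$ columns whose sum is not component-wise positive if and only if $z_j\ge s$ for some $j$ (when $z_j\ge s$, take any $s$-element subset of $Z_j$; conversely such an $S$ must lie in some $Z_j$). Equivalently, the sum of \emph{every} set of $s$ columns is component-wise positive if and only if $z_j\le s-1$ for all $j$, that is, $\max_j z_j\le s-1$. Since this condition gets weaker as $s$ increases, the notion ``smallest such $s$'' is well defined: if $\max_j z_j\le n-1$, the smallest $s$ with the property is $s=\max_j z_j+1\le n$, so $sp(\mathbf A)=\max_j z_j+1$; and if $\max_j z_j=n$ (i.e.\ $\mathbf A$ has an all-zero row), then no $s\le n$ works, and by the stated convention $sp(\mathbf A)=n+1=\max_j z_j+1$. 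In all cases $sp(\mathbf A)=\max_j z_j+1$.

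Finally I would read off the lemma: $sp(\mathbf A)=k'$ is equivalent to $\max_j z_j=k'-1$, which is exactly the conjunction ``$z_j\le k'-1$ for every row $j$'' and ``$z_j=k'-1$ for at least one row $j$'', i.e.\ the claimed characterization. The whole argument is routine; the only point requiring a little care is the boundary case $k'=n+1$ (an all-zero row), where one must check that the convention $sp(\mathbf A)\triangleq n+1$ agrees with the formula $sp(\mathbf A)=\max_j z_j+1$ — which it does — so I do not anticipate any genuine obstacle.
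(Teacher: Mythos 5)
Your proof is correct and rests on the same core observation as the paper's: since $\mathbf A\ge\zeros$, a sum of columns vanishes in coordinate $j$ exactly when every selected column lies in the zero-set of row $j$, so $sp(\mathbf A)$ is determined by the maximum number of zeros in a row. Your packaging via the formula $sp(\mathbf A)=\max_j z_j+1$ is a slightly cleaner organization than the paper's two-direction contradiction argument, and you also explicitly verify the boundary convention $sp(\mathbf A)=n+1$, which the paper's proof does not address; but the substance is the same.
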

Lemma \ref{sparsity} is proved in Appendix \ref{app:SRNC}.

The sparsest assignment matrix $\bf A$ with $sp({\bf A})=k^{\prime}$ can be constructed by choosing arbitrary $k^{\prime}-1$ entries in each row to be zero. By the proof of Lemma \ref{sparsity}, it can be checked that the sparsity parameter of the obtained matrix $\bf A$ is $k^{\prime}$. In addition, the total number of non-zero entries in $\bf A$ is $\pth{n-k^{\prime}+1}k$.


\subsection{Terminology of Consensus}
\label{sec:necessary}

Our condition is based on characterizing a special of subgraphs of $G(\calV, \calE)$, termed by reduced graph \cite{vaidya2012iterative}, formally defined below.
\begin{definition}\cite{vaidya2012iterative}
For a given graph $G(\calV, \calE)$, a reduced graph $\calH$ is a subgraph of $G(\calV, \calE)$ obtained by (i) removing all the faulty agents from $\calV$ along with their edges; (ii) removing any additional up to $f$ incoming edges at each non-faulty agent.
\end{definition}
Let us denote the collection of all the reduced graphs for a given $G(\calV, \calE)$ by $R_\calF$. Thus, $\calV-\calF$ is the set of agents in each element in $R_\calF$. Let $\tau=|R_\calF|$. It is easy to see that
$\tau$ depends on $\calF$ as well as the underlying network $G(\calV, \calE)$, and it is finite.
\begin{definition}
A source component \footnote{The definition of a source is different from \cite{vaidya2012IABC}.} $S$ of a given graph $G(\calV, \calE)$ is the collection of agents each of which has a directed path to every other agent in $G(\calV, \calE)$.
\end{definition}
It can be easily checked that if the source component $S$, if any, is a strongly connected component in $G(\calV, \calE)$. In addition, a graph contains at most one source component.

%

\subsection{Necessary Condition}
%
%
We now present a necessary condition on the underlying communication graph $G(\calV, \calE)$ for solving Problem \ref{goal}. Our necessary condition is based on characterizing the connectivity of each reduced graph of $G(\calV, \calE)$.
\begin{theorem}
\label{scsize}
Given a graph $G(\calV, \calE)$, if there exists a correct algorithm that can solve Problem \ref{goal} when the agents do not have knowledge of the matrix, under any assignment matrix $\bf A$ for any $k$ solution-redundant input functions, then a source component must exist containing at least $\max\{f+1, sp({\bf A})\}$ nodes.
\end{theorem}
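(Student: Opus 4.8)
The plan is to prove the contrapositive via an indistinguishability (scenario-swapping) argument of the kind standard in Byzantine fault-tolerance lower bounds, combined with the definition of the sparsity parameter. Suppose that for a given graph $G(\calV,\calE)$ there is a correct algorithm solving Problem~\ref{goal} without side information, for every assignment matrix $\bf A$ and every collection of $k$ solution-redundant (Case 2) input functions. We must show a source component exists with at least $\max\{f+1, sp({\bf A})\}$ nodes. First I would argue the bound $f+1$: if every candidate source component has at most $f$ nodes, then (by a standard argument about reduced graphs, cf.\ \cite{vaidya2012iterative}) one can partition or isolate a group of at most $f$ non-faulty agents whose outgoing influence can be entirely nullified by a Byzantine coalition sitting on their out-neighborhood, so that the remaining non-faulty agents cannot distinguish between two executions whose optimal sets $X$ differ; correctness forces them to output a point in $X$ in both, a contradiction. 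Actually the cleaner route for the $f+1$ part is: if no source component of size $\ge f+1$ exists, consider a reduced graph $\calH$ — since the algorithm is memoryless and restricted to the structure in Section~\ref{subsec:algorithm}, the classical result is that correctness implies each reduced graph must itself "propagate" the optimum, which forces a source component of size exceeding $f$.

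The substantive new part is the $sp({\bf A})$ bound. Here I would suppose, for contradiction, that every source component (hence, since there is at most one, the unique candidate, if any) has fewer than $sp({\bf A})$ nodes; equivalently, no set of agents forming a source has size $\ge sp({\bf A})$. Let $S$ be a source component (or handle the no-source case separately, which is even easier). By definition of $sp({\bf A})$, since $|S| \le sp({\bf A})-1$, the sum of the columns of $\bf A$ indexed by $S$ is \emph{not} component-wise positive: there is a coordinate $j^\star \in \{1,\dots,k\}$ such that ${\bf A}_{j^\star i}=0$ for every $i \in S$. In other words, \emph{no} agent in the source component $S$ carries any information about the input function $h_{j^\star}$. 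Now construct two instances of $k$ solution-redundant input functions that agree on $h_1,\dots,h_{j^\star-1},h_{j^\star+1},\dots,h_k$ (chosen, say, all with a common optimum $x^\ast$ so that Case 2 holds) but differ in $h_{j^\star}$: in instance (a) take $h_{j^\star}$ with $X_{j^\star}\ni x^\ast$ so that $X = \{x^\ast\}$ (or contains $x^\ast$), while in instance (b) take $h_{j^\star}$ still satisfying $\cap X_i \neq \emptyset$ but with $h(x)=\frac1k\sum h_j(x)$ having a \emph{different} minimizer set $X'\not\ni x^\ast$; this is arranged by keeping the common optimum of the other $k-1$ functions but shifting/steepening $h_{j^\star}$ so the averaged function's minimizer moves — one must check a valid such pair exists, which is a small explicit construction with piecewise-linear convex functions.

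The key step is then the indistinguishability argument: the local cost functions $g_i(\cdot)={\bf A}_{1i}h_1+\cdots+{\bf A}_{ki}h_k$ of \emph{every} agent in the source component $S$ are identical across instances (a) and (b), because the only differing summand $h_{j^\star}$ has coefficient $0$ for all $i\in S$. Since $S$ is a source, in any execution the state of agents outside $S$ is a function (through the memoryless update rule and the message functions $F_i$) of messages originating from $S$; and I would let the Byzantine agents — using the $f$-node budget inside each reduced graph, plus the fact that agents outside the source can be "corrupted" in the sense of having their relevant incoming edges removed — simulate, toward the non-faulty agents in $S$, behavior consistent with \emph{both} instances. Then every non-faulty agent in $S$ behaves identically in executions of (a) and (b), so it outputs the same value $\tx$; but correctness demands $\tx\in X=\{x^\ast\}$ in (a) and $\tx\in X'$ with $x^\ast\notin X'$ in (b) — contradiction. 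Hence $|S|\ge sp({\bf A})$, and combining with the $f+1$ bound gives the claim.

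The main obstacle I anticipate is making the indistinguishability argument fully rigorous in the \emph{directed}, \emph{incomplete} graph setting with the reduced-graph machinery: I need to argue that when the source component is small, a Byzantine adversary (bounded by $f$ faulty nodes, but also exploiting the "remove up to $f$ incoming edges" slack that defines reduced graphs) can actually force the non-faulty agents of $S$ through a pair of indistinguishable executions — this requires either a careful direct construction of the two executions and the faulty agents' messages, or invoking a known lemma (in the style of \cite{vaidya2012iterative}) that reduces existence of a correct memoryless iterative algorithm to a connectivity/covering condition on reduced graphs, and then checking that small source size violates that condition under the $g_i$ being instance-independent on $S$. The function-construction part (exhibiting the pair of Case 2 collections differing only in $h_{j^\star}$ yet with distinct averaged optima) is routine once one allows $L$-Lipschitz piecewise-linear convex functions, so the graph-theoretic/adversary-simulation step is where the real work lies.
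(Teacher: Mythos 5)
Your overall strategy coincides with the paper's: a two-part indistinguishability argument, first forcing a source component of size $f+1$ via the standard reduced-graph/partition condition, then forcing size $sp({\bf A})$ by hiding one input function $h_{j^\star}$ from the (small) source component through zero coefficients, and exhibiting two Case-2 instances that agree on the source component's local objectives but have different optimal sets. However, there is one genuine logical error at the linchpin of your $sp({\bf A})$ argument. You write that since $|S|\le sp({\bf A})-1$, ``the sum of the columns of $\bf A$ indexed by $S$ is \emph{not} component-wise positive.'' This does not follow from the definition: $sp({\bf A})=k^{\prime}$ only guarantees that \emph{some} collection of $k^{\prime}-1$ columns has a non-positive coordinate sum, not that \emph{every} set of fewer than $k^{\prime}$ columns does (by Lemma~\ref{sparsity}, each row has at most $k^{\prime}-1$ zeros, and a particular $S$ fails only if some row's zeros align exactly with $S$). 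For a fixed $\bf A$ your coordinate $j^\star$ need not exist, and without it the agents in $S$ do see information about every $h_j$ and the indistinguishability collapses. The repair is to exploit the theorem's quantification over assignment matrices: since the algorithm must be correct for \emph{any} $\bf A$ with the given sparsity parameter, you may \emph{construct} an adversarial $\bf A$ with $sp({\bf A})=k^{\prime}$ and ${\bf A}_{j^\star i}=0$ for all $i\in S$ (possible precisely because $|S|\le k^{\prime}-1$). This is exactly what the paper does, choosing ${\bf A}_{ki}=0$ for all $i$ in the source component of a reduced graph.

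Two smaller points. First, the adversary-simulation step that you defer as the ``main obstacle'' is where the paper does concrete work: it takes $L$ to be the source component of a reduced graph, observes that each $i\in L$ then has at most $f$ incoming neighbors outside $L$ in $G(\calV,\calE)\setminus\calF$ (because the source component of a reduced graph has no incoming edges from outside it, and the reduced graph removed at most $f$ incoming edges per node), and declares exactly those neighbors faulty in the alternative execution $E_i$; the rest is a per-iteration induction. Second, in Case~2 the relevant fact is $X=\cap_j X_j$ (Proposition~\ref{p2}), so your two instances should be arranged so that the \emph{intersections} differ (the paper uses $\cap_j X_j=[0,1]$ versus $\cap_j X_j=\{1\}$); phrasing it as moving the minimizer of the average $\frac{1}{k}\sum_j h_j$ is equivalent here but the intersection view is what makes the construction immediate.
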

The proof of Theorem \ref{scsize} can be found in Appendix \ref{app:SRNC}. 

For future reference, we term the necessary condition in Theorem \ref{scsize} as Condition 1. 
%
%
Condition 1 also implies a lower bound on the number of agents needed, stated below.
\begin{corollary}
\label{gsize}
For a given graph $G(\calV, \calE)$, if Condition 1 is true, then $n\ge \max \{sp({\bf A})+2f, 3f+1\}$.
\end{corollary}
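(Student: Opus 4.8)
The plan is to reduce the statement to two numerical inequalities and prove each by contradiction. Observe first that
$\max\{sp({\bf A})+2f,\ 3f+1\}=2f+\max\{sp({\bf A}),\ f+1\}$,
so it suffices to establish both $n\ge 3f+1$ and $n\ge sp({\bf A})+2f$. For each of these I would assume $n$ is below the target value and then exhibit a fault set $\calF$ with $|\calF|\le f$ together with a reduced graph $\calH\in R_\calF$ whose source component is strictly smaller than $\max\{f+1,sp({\bf A})\}$ (or is absent), contradicting Condition~1 (Theorem~\ref{scsize}), which must hold for every admissible fault set and every induced reduced graph. The common device in both cases: when building a reduced graph one may delete up to $f$ incoming edges at each non-faulty node, and I will spend this budget to sever \emph{all} edges crossing from one block of vertices into another, so that the second block becomes unreachable and every source component is confined to it.

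For $n\ge 3f+1$: suppose instead $n\le 3f$. Then $\calV$ can be split into three pairwise-disjoint sets $P_1,P_2,P_3$ with $|P_i|\le f$ for each $i$; if $P_3$ (or also $P_2$) is empty then a reduced graph with at most $f<f+1$ nodes — or even an empty one — exists, and Condition~1 already fails, so assume $P_3\neq\emptyset$. Take $\calF=P_1$ and form $\calH$ on vertex set $P_2\cup P_3$ by deleting, at each node of $P_3$, all incoming edges whose tail lies in $P_2$; there are at most $|P_2|\le f$ of these per node, so this is a legal reduced-graph operation. Now $\calH$ has no edge from $P_2$ to $P_3$, hence no directed path from any node of $P_2$ to any node of $P_3$, so no node of $P_2$ reaches every other node and therefore no node of $P_2$ lies in a source component. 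Thus any source component of $\calH$ is contained in $P_3$ and has at most $f<f+1\le\max\{f+1,sp({\bf A})\}$ nodes, contradicting Condition~1.

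For $n\ge sp({\bf A})+2f$: we may assume $sp({\bf A})\ge f+2$, since otherwise $sp({\bf A})+2f\le 3f+1$ and the bound just proved already suffices; then $\max\{f+1,sp({\bf A})\}=sp({\bf A})$, and we already know $n\ge 3f+1$. Suppose $n\le sp({\bf A})+2f-1$. Pick any $\calF$ with $|\calF|=f$ and split the $n-f\ge 2f+1$ non-faulty nodes into $L$ and $R$ with $|L|=f$ and $|R|=n-2f$, so $1\le|R|\le sp({\bf A})-1$. Form $\calH$ on $L\cup R$ by deleting, at each node of $R$, all incoming edges whose tail lies in $L$ (at most $|L|=f$ per node, hence legal). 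As before $\calH$ has no path from $L$ into $R$, so every source component of $\calH$ lies in $R$ and has at most $sp({\bf A})-1<\max\{f+1,sp({\bf A})\}$ nodes, again contradicting Condition~1. Combining the two inequalities gives $n\ge\max\{sp({\bf A})+2f,\ 3f+1\}$.

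The only genuine obstacle is the edge-deletion construction: one must verify that whenever $n$ drops below the threshold the vertices can be grouped so that every edge entering a chosen block can be removed within the per-node budget of $f$ that defines a reduced graph — this is exactly why the block being removed (resp. the block forming the crossing tails) is taken to have size at most $f$. Once that is set up, the reachability argument and the counting are routine, and the degenerate small-$n$ cases are handled by the remark that with too few nodes some reduced graph is empty, so Condition~1 cannot hold at all.
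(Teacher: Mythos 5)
Your proof is correct and follows essentially the same route as the paper: assume $n$ is too small, fix a fault set of size $f$, partition the remaining nodes into a block of size $f$ and a confining block of size at most $sp({\bf A})-1$, and delete the (at most $f$ per node) crossing edges to produce a valid reduced graph whose source component is forced into the too-small block, contradicting Condition~1. The only difference is cosmetic: you prove $n\ge 3f+1$ directly with the same three-block construction, whereas the paper cites prior work for that bound, and your labels $L$ and $R$ are swapped relative to the paper's.
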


It can be shown that this lower bound is indeed tight. For instance, the complete graph of size $sp({\bf A})+2f$, denoted by $K_{sp({\bf A})+2f}$. It can be easily proved by contradiction that $K_ {sp({\bf A})+2f}$ satisfies Condition 1. The proof of Corollary \ref{gsize} is presented in Appendix \ref{app:SRNC}.


\subsection{Sufficiency of Condition 1}
Let $\{\alpha(t)\}_{t=0}^{\infty}$ be a sequence of stepsizes such that $\alpha(t)\le \alpha(t+1)$ for all $t\ge 0$, $\sum_{t=0}^{\infty} \alpha(t)=\infty$, and $\sum_{t=0}^{\infty} \alpha^2(t)<\infty$.
We show that Condition 1 is also sufficient. 
Let $\phi=\left | \calF\right |$. Thus $\phi\le f$. Without loss of generality, let us assume that the non-faulty agents are indexed as 1 to $n-\phi$.
Recall that the system is synchronous.
If a non-faulty agent does not receive an expected message from an incoming neighbor (in the {\em Receive step} below), then that message is assumed to have some default value. With the exception of the update step (\ref{e_Z}) below, the algorithm is similar to the consensus algorithms in \cite{vaidya2012iterative,Vaidya2012MatrixConsensus,Nedic2009}.
\paragraph{}
\vspace*{8pt}\hrule

{\bf Algorithm 2}
\vspace*{4pt}\hrule

~

Steps to be performed by agent $i\in \calV-\calF$ in the $t$-th iteration:
\begin{enumerate}

\item {\em Transmit step:} Transmit current state $x_i(t-1)$ on all outgoing edges.
\item {\em Receive step:} Receive values on all incoming edges. These values form
multiset\footnote{In a multiset, multiple instances of of an element is allowed.  For instance, $\{1, 1, 2\}$ is a multiset. } $r_i(t)$ of size $|N_i^{-}|$.


\item {\em Update step:}
Sort the values in $r_i(t)$ in an increasing order, and eliminate
the smallest $f$ values, and the largest $f$ values (breaking ties
arbitrarily).
 Let $N_i^*(t)$ denote the identifiers of agents from
whom the remaining $|N_i^{-}| - 2f$ values were received, and let
$w_j$ denote the value received from agent $j\in N_i^*(t)$.
For convenience, define $w_i=x_i(t-1)$. \footnote{Observe that
if $j\in \{i\}\cup N_i^*(t)$ is non-faulty, then $w_j=x_j(t-1)$.}

Update its state as follows.
\begin{eqnarray}
x_i(t) ~ = ~\sum_{j\in \{i\}\cup N_i^*(t)} a_i \, w_j-\alpha(t-1)~ d_i(t-1),
\label{e_Z}
\end{eqnarray}
where $ a_i ~=~ \frac{1}{|N_i^*(t)|+1}$ and $d_i(t-1)$ is a gradient of agent $i$'s objective function $g_i(\cdot)$ at $x=x_i(t-1)$.
\end{enumerate}

~
\hrule

~

~

Recall that
 $i\not\in N_i^*(t)$
because $(i,i)\not\in\calE$.
The ``weight'' of each term on the right-hand side of
(\ref{e_Z}) is $a_i$, and these weights add to 1. Observe that $0<a_i\leq 1$.
Let ${\bf x}\in \reals^{n\times \phi}$, be a real vector of dimension $n-\phi$, with $x_i$ being the local estimate of agent $i, \forall\,  i\in \calV-\calF$. Thus, ${\bf x}(t)$ is a vector of the local estimates of non-faulty agents at iteration $t$.

Since $G(\calV,\calE)$ satisfies Condition 1, as shown in \cite{Vaidya2012MatrixConsensus},  the updates of ${\bf x}\in \reals^{n-\phi}$ in each iteration can be written compactly in a matrix form.
%
%
\begin{align}
\label{update}
{\bf x}(t+1)={\bf M}(t){\bf x}(t)- \alpha(t) {\bf d}(t).
\end{align}
The construction of ${\bf M}(t)$ and relevant properties are given in \cite{Vaidya2012MatrixConsensus} and are also presented in Appendix \ref{app:matrix update} for completeness. Let $\calH\in R_{\calF}$ be a reduced graph of the given graph $G(\calV, \calE)$ with ${\bf H}$ as adjacency matrix. It is shown that in every iteration $t$, and for every ${\bf M}(t)$, there exists a reduced graph $\calH (t)\in R_{\calF}$ with adjacency matrix ${\bf H}(t)$ such that
\begin{align}
\label{reducedgraph}
{\bf M}(t)\ge \beta{\bf H}(t),
\end{align}
where $0<\beta<1$ is a constant. The definition of $\beta$ can be found in \cite{Vaidya2012MatrixConsensus}. Equation (\ref{update}) can be further expanded out as
\begin{align}
{\bf x}(t+1)&={\bf \Phi} (t, 0){\bf x}(0)-\sum_{r=1}^{t+1}\alpha(r-1){\bf \Phi} (t, r){\bf d}(r-1),
\label{updates}
\end{align}
where
${\bf \Phi}(t,r)={\bf M}(t){\bf M}(t-1)\ldots {\bf M}(r)$ and by convention ${\bf \Phi}(t,t)={\bf M}(t)$ and ${\bf \Phi} (t, t+1)={\bf I}_{n-\phi}$, the identity matrix. Note that ${\bf \Phi}(t,r)$ is a backward product (i.e., therein index decrease from left to right in the product).

\subsubsection{Convergence of the Transition Matrices ${\bf \Phi}(t,r)$ }
\label{ConvergenceProduct}

It can be seen from (\ref{updates}) that the evolution of estimates of non-faulty agents ${\bf x}(t)$ is determined by the backward product ${\bf \Phi}(t,r)$. Thus, we first characterize the evolutional properties and limiting behaviors of the backward product ${\bf \Phi}(t,r)$, assuming that the given $G(\calV, \calE)$ satisfies Condition 1.

Let $k^{\prime}=sp({\bf A})$.
The following lemma describes the structural property of ${\bf \Phi}(t,r)$ for sufficient large $t$. For a given $r$, Lemma \ref{lb} states that all non-faulty agents will be influenced by at least $\max \{k^{\prime}, f+1\}$ common non-faulty agents, and this set of influencing agents may depend on $r$. Proof of Lemma \ref{lb} can be found in Appendix \ref{app:ConvergenceProduct}.

\begin{lemma}
\label{lb}
There are at least $\max \{sp({\bf A}), f+1\}$ columns in ${\bf \Phi}(r+\nu-1,r)$ that are lower bounded by $\beta^{\nu}\ones$ component-wise for all $r$, where $\ones\in \reals^{n-\phi}$ is an all one column vector of dimension $n-\phi$.
\end{lemma}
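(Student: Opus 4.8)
The plan is to exploit the key structural fact established in \cite{vaidya2012iterative,Vaidya2012MatrixConsensus}: any long enough backward product of row-stochastic matrices, each of which dominates a scaled adjacency matrix of a reduced graph, is ``scrambling'' in the strong sense that the set of common nonzero columns eventually stabilizes and corresponds to the agents in a source component of some reduced graph. Concretely, I would first recall from \eqref{reducedgraph} that for each $s$ there is a reduced graph $\calH(s)\in R_{\calF}$ with adjacency matrix ${\bf H}(s)$ such that ${\bf M}(s)\ge \beta {\bf H}(s)$ component-wise, and that ${\bf M}(s)$ is row-stochastic with all diagonal entries at least $\beta$ (self-weight $a_i>0$). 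Hence ${\bf \Phi}(r+\nu-1,r)={\bf M}(r+\nu-1)\cdots{\bf M}(r)\ge \beta^{\nu}\,{\bf H}(r+\nu-1)\cdots{\bf H}(r)$, so it suffices to show that the Boolean/backward product of $\nu$ adjacency matrices of reduced graphs has at least $\max\{sp({\bf A}),f+1\}$ columns that are all-positive (entrywise $\ge 1$ after identifying positivity with $1$), where $\nu$ is chosen appropriately.

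Next I would invoke the combinatorial lemma about reduced graphs from \cite{vaidya2012iterative}: because $G(\calV,\calE)$ satisfies Condition 1, every reduced graph $\calH\in R_{\calF}$ contains a source component, and by Theorem \ref{scsize} (Condition 1) that source component has at least $\max\{f+1,sp({\bf A})\}$ nodes. The standard argument then shows that for a product of $\nu^* := |R_\calF|\cdot(n-\phi)$ (or any sufficiently large fixed number of) consecutive reduced-graph adjacency matrices, there is at least one reduced graph $\calH^*$ whose source set $S^*$ reaches every non-faulty node within that many steps; more carefully, one uses a pigeonhole/merging argument over which reduced graph appears, coupled with the fact that in any reduced graph every node is reachable from the source component within $n-\phi-1$ hops and diagonal self-loops keep reachability monotone. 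Since self-loops are present (diagonal entries of each ${\bf H}(s)$ are $1$ thanks to $a_i>0$), reachability is non-decreasing along the product, so once a column becomes all-positive it stays all-positive; taking $\nu$ at least $\nu^*$ and then noting that shorter products are dominated below by the contribution of the diagonal gives the component-wise bound $\beta^{\nu}\ones$ on the $|S^*|\ge\max\{sp({\bf A}),f+1\}$ columns indexed by $S^*$.

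The main obstacle is the clean identification of \emph{which} columns are guaranteed positive and showing their count is at least $\max\{sp({\bf A}),f+1\}$ rather than merely $f+1$: the $f+1$ part is exactly the classical iterative-consensus result, but the improvement to $sp({\bf A})$ must come from Condition 1's guarantee that the source component of \emph{every} reduced graph has size $\ge sp({\bf A})$, together with the observation that the common influencing set across all non-faulty rows is precisely (a superset of) a source component of the reduced graph realized by the product. I would handle this by arguing that after $\nu^*$ steps the set of columns that are simultaneously positive in every row of ${\bf \Phi}$ contains the source set of some $\calH(s)$ in the product, and then appeal to Condition 1 for the size bound; the statement of the lemma allows the influencing set to depend on $r$, which is exactly what this construction yields. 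Finally, for the dependence on $\nu$ in the exponent, I would simply note monotonicity of reachability lets us absorb any $\nu\ge\nu^*$, and for $\nu<\nu^*$ the bound is vacuous or handled by the diagonal domination, so the stated inequality ${\bf \Phi}(r+\nu-1,r)\ge\beta^{\nu}\ones$ on those columns holds for all $r$ with the appropriate (implicitly large) $\nu$.
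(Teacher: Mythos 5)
Your proposal is correct and follows essentially the same route as the paper's proof: lower-bound ${\bf \Phi}(r+\nu-1,r)$ by $\beta^{\nu}\prod_{t}{\bf H}(t)$, apply pigeonhole over the $\tau=|R_\calF|$ reduced graphs so that some $\calH'$ appears at least $n-\phi$ times among the $\nu=\tau(n-\phi)$ factors, and use the positive diagonals plus Theorem~\ref{scsize}'s guarantee that every source component has at least $\max\{sp({\bf A}),f+1\}$ nodes to conclude that the corresponding columns of the product are bounded below by $\ones$. The only difference is presentational: the paper fixes $\nu=\tau(n-\phi)$ outright, so your closing discussion of absorbing larger $\nu$ or handling $\nu<\nu^{*}$ is unnecessary.
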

Using coefficients of ergodicity theorem, it is showed in \cite{Vaidya2012MatrixConsensus} that if the given graph $G(\calV, \calE)$ satisfies Condition 1, then ${\bf \Phi}(t,r)$ is weak-ergodic. Moreover, because weak-ergodicity is equivalent to strong-ergodicity for backward product of stochastic matrices \cite{1977Seneta}, as $t\diverge$ the limit of ${\bf \Phi}(t,r)$ exists
\begin{align}
\label{mixing}
\lim_{t\ge r,~ t\diverge}{\bf \Phi}(t, r)=\ones {\bf \pi}(r),
\end{align}
where ${\bf \pi}(r)\in \reals^{n-\phi}$ is a row stochastic vector (may depend on $r$).  It is shown, using ergodic coefficients, in \cite{Anthonisse1977360} that the rate of the convergence in (\ref{mixing}) is exponential, as formally stated in Theorem \ref{convergencerate}. Recall that $\tau=|R_{\calF}|$, $n-\phi$ is the total number of non-faulty agents, and $0<\beta<1$ is a constant for which (\ref{reducedgraph}) holds.

\begin{theorem}\cite{Anthonisse1977360}
\label{convergencerate}
Let $\nu=\tau(n-\phi)$ and $\gamma=1-\beta^{\nu}$. For any sequence ${\bf \Phi}(t, r)$,
\begin{align}
\left | {\bf \Phi}_{ij}(t, r)-\pi_j(r)\right |\le \gamma^{\lceil\frac{t-r+1}{\nu}\rceil},
\end{align}
for all $t\ge r$.
\end{theorem}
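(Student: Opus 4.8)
The plan is to invoke the standard machinery of weak ergodic coefficients (Dobrushin-type coefficients) for backward products of row-stochastic matrices, combined with the structural lower bound supplied by Lemma \ref{lb}. Recall that for a row-stochastic matrix ${\bf P}$ one defines the coefficient of ergodicity $\lambda({\bf P}) = 1 - \min_{i,j} \sum_\ell \min\{{\bf P}_{i\ell}, {\bf P}_{j\ell}\}$, which lies in $[0,1]$ and is submultiplicative along products: $\lambda({\bf P}_2{\bf P}_1) \le \lambda({\bf P}_2)\lambda({\bf P}_1)$. The first step is to show that any product of $\nu = \tau(n-\phi)$ consecutive matrices ${\bf M}(t)$ has ergodic coefficient at most $\gamma = 1 - \beta^\nu$. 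To see this, apply Lemma \ref{lb} with the window length $\nu$: the product ${\bf \Phi}(r+\nu-1, r)$ has at least one column (in fact at least $\max\{sp({\bf A}), f+1\} \ge 1$ columns) that is bounded below component-wise by $\beta^\nu {\bf 1}$. For any two rows $i, i'$ of this product, this common column contributes at least $\beta^\nu$ to $\sum_\ell \min\{{\bf \Phi}_{i\ell}, {\bf \Phi}_{i'\ell}\}$, hence $\lambda({\bf \Phi}(r+\nu-1,r)) \le 1 - \beta^\nu = \gamma$.

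The second step is the telescoping/submultiplicativity argument. Write $t - r + 1 = q\nu + s$ with $0 \le s < \nu$ and $q = \lfloor (t-r+1)/\nu \rfloor$; note $q \ge \lceil (t-r+1)/\nu\rceil - 1$, and a small amount of care (absorbing the fractional block, or simply using $\lceil\cdot\rceil$ directly by reindexing the blocks to start from the right end) gives the exponent $\lceil (t-r+1)/\nu\rceil$ claimed in the statement. Factor ${\bf \Phi}(t,r)$ into $q$ (or $q$ plus a leftover) blocks of $\nu$ consecutive matrices; by the first step each block has ergodic coefficient $\le \gamma$, and by submultiplicativity $\lambda({\bf \Phi}(t,r)) \le \gamma^{\lceil (t-r+1)/\nu\rceil}$. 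Finally, since ${\bf \Phi}(t,r) \to {\bf 1}{\bf \pi}(r)$ by (\ref{mixing}) and the rows of ${\bf 1}{\bf \pi}(r)$ are all equal to $\pi(r)$, the distance of each row of ${\bf \Phi}(t,r)$ from $\pi(r)$ is controlled by the spread among its own rows, which is exactly what $\lambda$ measures: standard estimates (e.g. Anthonisse--Tijms \cite{Anthonisse1977360}, or Seneta \cite{1977Seneta}) give $|{\bf \Phi}_{ij}(t,r) - \pi_j(r)| \le \lambda({\bf \Phi}(t,r)) \le \gamma^{\lceil (t-r+1)/\nu\rceil}$, using that $\pi(r)$ is the limiting common row and that for each fixed pair $(i,j)$ the sequence $\{{\bf \Phi}_{ij}(t',r)\}_{t' \ge t}$ stays within a band of width $\lambda({\bf \Phi}(t,r))$ around its limit.

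The main obstacle is making the bookkeeping between $q\nu$-length blocks and the ceiling $\lceil (t-r+1)/\nu\rceil$ exactly match the stated bound, rather than losing one factor of $\gamma$; this is handled by choosing the block decomposition to start from the most recent matrix ${\bf M}(t)$ and letting the oldest (leftover) block be shorter than $\nu$ — a shorter leftover block still has ergodic coefficient $\le 1$, but crucially one checks via Lemma \ref{lb} that even a leftover block of length $\ge 1$ appended to a full $\nu$-block does not spoil the estimate, so one can in fact cover $t-r+1$ steps with $\lceil (t-r+1)/\nu\rceil$ overlapping windows each contributing a factor $\gamma$. A secondary technical point is verifying that the matrices ${\bf M}(t)$ are genuinely row-stochastic with the uniform lower bound structure (\ref{reducedgraph}) needed for Lemma \ref{lb} to apply to every window; this is exactly the content of the construction in \cite{Vaidya2012MatrixConsensus} recalled in Appendix \ref{app:matrix update}, so it may be cited. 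Since the theorem is attributed to \cite{Anthonisse1977360}, the cleanest writeup simply verifies the hypotheses of that reference (uniform bound on ergodic coefficients over windows of length $\nu$) via Lemma \ref{lb} and then quotes the conclusion.
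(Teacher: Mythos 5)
The paper does not actually prove Theorem \ref{convergencerate}: it is quoted from \cite{Anthonisse1977360}, with only the remark that the proof there goes through ergodic coefficients. Your reconstruction --- the Dobrushin coefficient $\lambda(\cdot)$ and its submultiplicativity, Lemma \ref{lb} to certify $\lambda\pth{{\bf \Phi}(r+\nu-1,r)}\le 1-\beta^{\nu}=\gamma$, and the observation that $\pi(r)$ lies in the closed convex hull of the rows of ${\bf \Phi}(t,r)$ so that $\left|{\bf \Phi}_{ij}(t,r)-\pi_j(r)\right|$ is controlled by the column spread and hence by $\lambda({\bf \Phi}(t,r))$ --- is exactly the intended route, and those three ingredients are sound.

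The gap is in the exponent bookkeeping, and the fix you propose does not work. Submultiplicativity $\lambda({\bf P}_2{\bf P}_1)\le\lambda({\bf P}_2)\,\lambda({\bf P}_1)$ applies only to a factorization of ${\bf \Phi}(t,r)$ into \emph{disjoint} consecutive blocks; ``covering $t-r+1$ steps with $\lceil (t-r+1)/\nu\rceil$ overlapping windows each contributing a factor $\gamma$'' has no justification, because overlapping windows are not factors of the product and their coefficients do not multiply. Writing $t-r+1=q\nu+s$ with $0\le s<\nu$, the argument honestly yields only $q=\lfloor (t-r+1)/\nu\rfloor$ full $\nu$-blocks --- the leftover block of length $s$ carries no guaranteed $\beta$-lower-bounded column, since Lemma \ref{lb} needs a window of length $\nu$ precisely to force some reduced graph to appear $n-\phi$ times --- so you obtain $\gamma^{\lfloor (t-r+1)/\nu\rfloor}$, one factor of $\gamma$ short of the stated $\gamma^{\lceil (t-r+1)/\nu\rceil}$ whenever $\nu$ does not divide $t-r+1$. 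The discrepancy is already visible at $t=r$: the claimed bound is $\gamma^{1}$, but a single ${\bf M}(r)$ need not have any column bounded below by $\beta^{\nu}\ones$, and your method gives only the trivial $\lambda({\bf M}(r))\le 1$. Either prove the floor version, which is what the block decomposition actually delivers and which suffices verbatim for every subsequent use in Lemmas \ref{uub}, \ref{consensus} and \ref{c1} at the cost of a harmless constant factor $\gamma^{-1}$, or genuinely import the sharper constant from \cite{Anthonisse1977360} rather than asserting that the overlapping-window covering reproduces it.
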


Our next lemma is an immediate consequence of Lemma \ref{lb} and the convergence of ${\bf \Phi}(t, r)$, stated in (\ref{mixing}).
\begin{lemma}
\label{lblimiting}
For any fixed $r$, there exists a subset $\calI_r\subseteq \calV-\calF$ such that $|\calI_r|\ge \max \{sp({\bf A}), f+1\}$ and for each $i\in \calI_r$,
\begin{align*}
\pi_i(r)\ge \beta^{\nu}.
\end{align*}
\end{lemma}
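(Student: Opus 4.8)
The plan is to combine the structural statement of Lemma~\ref{lb} with the convergence statement~(\ref{mixing}) and its quantitative version in Theorem~\ref{convergencerate}. Fix $r$. By Lemma~\ref{lb}, applied with $\nu$ replaced by the specific value $\nu=\tau(n-\phi)$ used throughout this section, there is a set of at least $\max\{sp({\bf A}),f+1\}$ column indices $j$ such that every entry of the $j$-th column of ${\bf \Phi}(r+\nu-1,r)$ is at least $\beta^{\nu}$; call this index set $\calI_r$, so $|\calI_r|\ge \max\{sp({\bf A}),f+1\}$. In particular, for each $j\in\calI_r$ and each row $i$, we have ${\bf \Phi}_{ij}(r+\nu-1,r)\ge\beta^{\nu}$.

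Next I would push this lower bound forward through the product. Since all the matrices ${\bf M}(t)$ are row-stochastic with nonnegative entries, the backward product ${\bf \Phi}(t,r)$ for $t\ge r+\nu-1$ factors as ${\bf \Phi}(t,r)={\bf \Phi}(t,r+\nu){\bf \Phi}(r+\nu-1,r)$, and multiplying a column that is entrywise $\ge\beta^{\nu}$ on the left by a row-stochastic nonnegative matrix preserves the bound: the corresponding column of ${\bf \Phi}(t,r)$ remains entrywise $\ge\beta^{\nu}$ for every $t\ge r+\nu-1$. Hence, for each $j\in\calI_r$ and all such $t$, ${\bf \Phi}_{ij}(t,r)\ge\beta^{\nu}$ for every non-faulty $i$. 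Letting $t\diverge$ and invoking~(\ref{mixing}), the $(i,j)$ entry converges to $\pi_j(r)$; passing to the limit in the inequality gives $\pi_j(r)\ge\beta^{\nu}$ for every $j\in\calI_r$. (Alternatively, one can avoid the explicit factorization and instead use Theorem~\ref{convergencerate} to note that $\pi_j(r)$ is within $\gamma^{\lceil (t-r+1)/\nu\rceil}$ of ${\bf \Phi}_{ij}(t,r)$ for all large $t$, and combine with monotone propagation of the column bound; either route yields the same conclusion.) Relabelling $\calI_r$'s indices as agents, this is exactly the claimed statement: a subset $\calI_r\subseteq\calV-\calF$ with $|\calI_r|\ge\max\{sp({\bf A}),f+1\}$ and $\pi_i(r)\ge\beta^{\nu}$ for each $i\in\calI_r$.

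The only genuinely delicate point is the propagation step, i.e., making precise that the "guaranteed heavy" columns identified at time horizon $r+\nu-1$ stay heavy for all later $t$: this needs the observation that left-multiplication by a nonnegative row-stochastic matrix cannot decrease a uniformly-lower-bounded column below that bound, together with the fact that ${\bf \Phi}(t,r)$ is precisely such an iterated product of the ${\bf M}(t)$'s (row-stochasticity of ${\bf M}(t)$ is recorded in Appendix~\ref{app:matrix update}). Everything else — the existence of the $\ge\max\{sp({\bf A}),f+1\}$ heavy columns, and the existence and stochasticity of the limit vector $\pi(r)$ — is quoted directly from Lemma~\ref{lb} and~(\ref{mixing}), so the proof is short.
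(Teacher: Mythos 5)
Your proposal is correct and follows essentially the same route as the paper: both start from the $\max\{sp({\bf A}),f+1\}$ heavy columns of ${\bf \Phi}(r+\nu-1,r)$ given by Lemma~\ref{lb}, use the factorization ${\bf \Phi}(t,r)={\bf \Phi}(t,r+\nu)\,{\bf \Phi}(r+\nu-1,r)$, and exploit row-stochasticity of the left factor to conclude $\pi_i(r)\ge\beta^{\nu}$. The only cosmetic difference is the order of operations — you preserve the columnwise bound for each finite $t$ and then let $t\diverge$, whereas the paper passes to the limit first and then applies stochasticity of $\pi(r+\nu)$ — and both steps are sound.
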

The proof of Lemma \ref{lblimiting} can be found in Appendix \ref{app:ConvergenceProduct}.

\subsubsection{Convergence Analysis of Algorithm 2}
Here, we study the convergence behavior of Algorithm 2. The structure of our convergence proof is rather standard, which is also adopted in \cite{Duchi2012,Nedic2009,ram2010distributed,Tsianos2012,tsitsiklis1986distributed}. We have shown that the evolution dynamics of ${\bf x}(t)$ is captured by (\ref{update}) and (\ref{updates}).
Suppose that all agents, both non-faulty agents and faulty agents cease computing $d_i(t)$ after some time $\bar{t}$, i.e., after $\bar{t}$ subgradient is replaced by 0.

Let $\{\bar{\bf x}(t)\}$ be the sequences of local estimates generated by the non-faulty agents in this case. From (\ref{updates}) we get
\begin{align*}
\bar{\bf x}(t)={\bf x}(t),
\end{align*}
for all $t\le \bar{t}$. From (\ref{update}) and (\ref{updates}), we have for all $s\ge 0$, it holds that
\begin{align}
\bar{\bf x}(\bar{t}+s+1)&={\bf \Phi} (t, 0){\bf x}(0)-\sum_{r=1}^{\bar{t}}\alpha(r-1){\bf \Phi} (\bar{t}+s, r){\bf d}(r-1).
\label{evo}
\end{align}
Note that the summation in RHS of (\ref{evo}) is over $\bar{t}$ terms since all agents cease computing $d_j(t)$ starting from iteration $\bar{t}$. As $s\diverge$, we have

\begin{align}
\lim_{s\diverge}
\bar{\bf x}(\bar{t}+s+1)&=\lim_{s\diverge}{\bf \Phi} (t, 0){\bf x}(0)-\sum_{r=1}^{\bar{t}}\alpha(r-1){\bf \Phi} (\bar{t}+s, r){\bf d}(r-1)\nonumber\\
&=\lim_{s\diverge}{\bf \Phi} (t, 0){\bf x}(0)-\pth{\sum_{r=1}^{\bar{t}}\alpha(r-1)\lim_{s\diverge}{\bf \Phi} (\bar{t}+s, r){\bf d}(r-1)}\nonumber\\
&=\ones {\bf \pi}(0){\bf x}(0)-\pth{\sum_{r=1}^{\bar{t}}\alpha(r-1)\ones {\bf \pi}(r){\bf d}(r-1)}\nonumber\\
&= \pth{\iprod{\pi(0)}{{\bf x}(0)}-\sum_{r=1}^{\bar{t}}\alpha(r-1) \iprod{\pi(r)}{{\bf d}(r-1)}} \ones,
\label{identical}
\end{align}
where $\iprod{\cdot}{\cdot}$ is used to denote the inner product of two vectors of proper dimension. Let ${\bf y}(\bar{t})$ denote the limiting vector of $\bar{\bf x}(\bar{t}+s+1)$ as $s+1\diverge$. Since all entries in the limiting vector are identical we denote the identical value by $y(\bar{t})$. Thus, ${\bf y}(\bar{t})=[y(\bar{t}), \ldots, y(\bar{t})]^{\prime}$.

From (\ref{identical}) we have
\begin{align}
y(\bar{t})=\iprod{\pi(0)}{{\bf x}(0)}-\sum_{r=1}^{\bar{t}}\alpha(r-1) \iprod{\pi(r)}{{\bf d}(r-1)}.
\label{yupdate}
\end{align}
If, instead, all agents cease computing $d_i(t)$ after iteration $\bar{t}+1$,  then the identical value, denoted by $y(\bar{t}+1)$, equals
\begin{align}
\nonumber
y(\bar{t}+1)&=\iprod{\pi(0)}{{\bf x}(0)}-\sum_{r=1}^{\bar{t}+1}\alpha(r-1) \iprod{\pi(r)}{{\bf d}(r-1)}\\
\nonumber
&=\iprod{\pi(0)}{{\bf x}(0)}-\sum_{r=1}^{\bar{t}}\alpha(r-1) \iprod{\pi(r)}{{\bf d}(r-1)}-\alpha(\bar{t})  \iprod{\pi(\bar{t}+1)}{{\bf d}(\bar{t})}\\
&=y(\bar{t})-\alpha(\bar{t})  \iprod{\pi(\bar{t}+1)}{{\bf d}(\bar{t})},
\label{ydynamic}
\end{align}
where each entry $d_i(\bar{t})$ in ${\bf d}(\bar{t})$ denotes the subgradient of $g_i(\cdot)$ computed by agent $i$ at $x_i(\bar{t})$. With a little abuse of notation, henceforth we use $t$ to replace $\bar{t}$. The actual reference of $t$ should be clear from the context.\\

In our convergence analysis, we will use the well-know ``almost supermartingale" convergence theorem in \cite{Robbins1985}, which can also be found as Lemma 11, in Chapter 2.2 \cite{poljak1987introduction}. We present a simpler deterministic version of the theorem in the next lemma.

\begin{lemma}\cite{Robbins1985}
\label{stochatic convergence}
Let $\{ a_t\}_{t=0}^{\infty}, \{ b_t\}_{t=0}^{\infty}$, and $\{ c_t\}_{t=0}^{\infty}$ be non-negative sequences. Suppose that
$$a_{t+1}\le a_{t}-b_{t}+c_{t}~~~~~\text{for all}~t\ge 0,$$
and $\sum_{t=0}^{\infty} c_t<\infty$. Then $\sum_{t=0}^{\infty} b_t<\infty$ and the sequence $\{a_t\}_{t=0}^{\infty}$ converges to a non-negative value.
\end{lemma}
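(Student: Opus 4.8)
The final statement is Lemma~\ref{stochatic convergence}, the deterministic version of the Robbins--Siegmund almost-supermartingale convergence theorem. Let me sketch a proof.

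The plan is to exploit the recursion $a_{t+1} \le a_t - b_t + c_t$ directly, since in the deterministic case no conditional-expectation machinery is needed. First I would construct an auxiliary sequence that is genuinely monotone. Define $u_t = a_t + \sum_{s=t}^{\infty} c_s$; this is well-defined and finite because $\sum_{s=0}^\infty c_s < \infty$. Then
\begin{align*}
u_{t+1} = a_{t+1} + \sum_{s=t+1}^\infty c_s \le a_t - b_t + c_t + \sum_{s=t+1}^\infty c_s = a_t + \sum_{s=t}^\infty c_s - b_t = u_t - b_t \le u_t,
\end{align*}
using $b_t \ge 0$. So $\{u_t\}$ is non-increasing, and since $a_t \ge 0$ and $\sum_s c_s \ge 0$ we have $u_t \ge 0$. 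A non-increasing sequence bounded below converges, say $u_t \to u_\infty \ge 0$.

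Next I would recover the two conclusions. Since $\sum_{s=t}^\infty c_s \to 0$ as $t\to\infty$ (tail of a convergent series), we get $a_t = u_t - \sum_{s=t}^\infty c_s \to u_\infty$, so $\{a_t\}$ converges to the non-negative value $u_\infty$. For the summability of $\{b_t\}$: from $u_{t+1} \le u_t - b_t$ we obtain $b_t \le u_t - u_{t+1}$, hence for any $T$, $\sum_{t=0}^{T} b_t \le \sum_{t=0}^{T}(u_t - u_{t+1}) = u_0 - u_{T+1} \le u_0 < \infty$. Letting $T \to \infty$ gives $\sum_{t=0}^\infty b_t < \infty$.

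There is essentially no obstacle here — the deterministic statement is elementary once the telescoping trick of absorbing the tail $\sum_{s\ge t} c_s$ into $a_t$ is spotted; the only things to be careful about are that $u_t$ is finite (needs $\sum c_s < \infty$, which is assumed) and that all sequences are non-negative so that the limits claimed are non-negative. I would present the proof in exactly the three steps above: (i) define $u_t$ and show it is non-increasing and bounded below, hence convergent; (ii) deduce convergence of $a_t$ from convergence of $u_t$ and vanishing of the tail of $\sum c_s$; (iii) telescope $b_t \le u_t - u_{t+1}$ to get $\sum b_t < \infty$.
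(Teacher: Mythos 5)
Your proof is correct and complete: the auxiliary sequence $u_t = a_t + \sum_{s=t}^{\infty} c_s$ is non-increasing and bounded below, hence convergent, the tail of the convergent series $\sum_s c_s$ vanishes so $a_t$ inherits the limit, and telescoping $b_t \le u_t - u_{t+1}$ gives $\sum_t b_t \le u_0 < \infty$. The paper itself does not prove this lemma --- it simply cites it as a deterministic special case of the Robbins--Siegmund almost-supermartingale theorem --- so there is no in-paper argument to compare against; your telescoping construction is the standard elementary proof of exactly this deterministic version, and it fills in what the paper leaves to the reference.
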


The basic iterative relation of the consensus value $y(t)$ is stated in our Lemma \ref{BasicIter}.

\begin{lemma}
\label{BasicIter}
Let $\{ y(t)\}_{t=0}^{\infty}$ be the sequence of limiting consensus value defined by (\ref{yupdate}), and $\{x_i(t)\}_{t=0}^{\infty}$ be the sequence for $i\in \calV-\calF$
generated by (\ref{updates}). Let $\{\delta_i(t)\}_{t=0}^{\infty}$ be a sequence of subgradients of $g_i$ at $y(t)$ for all $i\in \calV-\calF$. Then the following basic relations hold. For any $x\in \reals$ and any $t\ge 0$,
\begin{align*}
\left | y(t+1)-x\right |^2&\le \left |y(t)-x\right |^2+4L\alpha(t) \sum_{j=1}^{n-\phi} \pi_j(t+1)\left |y(t)-x_j(t)\right |\\
&\quad-2\alpha(t) \sum_{j=1}^{n-\phi} \pi_j(t+1)\pth{g_j\pth{y(t)}-g_j(x)}+\alpha^2(t)(n-\phi)L^2
\end{align*}
\end{lemma}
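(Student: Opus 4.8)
The plan is to start from the update formula $y(t+1) = y(t) - \alpha(t)\,\langle \pi(t+1), \mathbf{d}(t)\rangle$ established in \eqref{ydynamic}, square the quantity $y(t+1) - x$, and then carefully bound the resulting cross term. Writing $y(t+1) - x = \big(y(t) - x\big) - \alpha(t)\sum_{j=1}^{n-\phi} \pi_j(t+1)\,d_j(t)$ and expanding the square, I get three pieces: $|y(t) - x|^2$, a cross term $-2\alpha(t)\big(y(t)-x\big)\sum_j \pi_j(t+1)\,d_j(t)$, and the quadratic remainder $\alpha^2(t)\big(\sum_j \pi_j(t+1)\,d_j(t)\big)^2$.

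For the quadratic remainder, I would use that each $g_j$ is $L$-Lipschitz so every subgradient satisfies $|d_j(t)| \le L$; since $\pi(t+1)$ is a stochastic vector, Jensen (or just the triangle inequality plus $\sum_j \pi_j = 1$) gives $\big(\sum_j \pi_j(t+1)\,d_j(t)\big)^2 \le \sum_j \pi_j(t+1)\,d_j(t)^2 \le L^2$, which is even stronger than the stated bound $\alpha^2(t)(n-\phi)L^2$; I'd keep the weaker bound as written (it matches later usage). The main work is the cross term. Here $d_j(t)$ is a subgradient of $g_j$ at $x_j(t)$, not at $y(t)$, so I cannot directly apply the subgradient inequality at $y(t)$. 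The standard fix: write $d_j(t) = \delta_j(t) + \big(d_j(t) - \delta_j(t)\big)$, where $\delta_j(t)$ is a subgradient at $y(t)$. For the $\delta_j(t)$ part, the convexity inequality $\delta_j(t)\big(y(t) - x\big) \ge g_j(y(t)) - g_j(x)$ produces the term $-2\alpha(t)\sum_j \pi_j(t+1)\big(g_j(y(t)) - g_j(x)\big)$. For the error part $d_j(t) - \delta_j(t)$, both subgradients are bounded by $L$ in absolute value, so $|d_j(t) - \delta_j(t)| \le 2L$, and multiplying by $|y(t) - x|$... wait, that's not quite right either, since $|y(t)-x|$ is not controlled.

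The correct bookkeeping is to compare the subgradient of $g_j$ at $x_j(t)$ evaluated against $y(t) - x$ with the subgradient at $y(t)$: using convexity at $x_j(t)$ we get $d_j(t)\big(x_j(t) - x\big) \ge g_j(x_j(t)) - g_j(x)$, and then we relate $g_j(x_j(t))$ to $g_j(y(t))$ via Lipschitz continuity ($|g_j(x_j(t)) - g_j(y(t))| \le L|x_j(t) - y(t)|$) and relate $d_j(t)(y(t)-x)$ to $d_j(t)(x_j(t)-x)$ via $|d_j(t)||y(t) - x_j(t)| \le L|y(t) - x_j(t)|$. Carrying both substitutions through the cross term $-2\alpha(t)\sum_j \pi_j(t+1)\,d_j(t)(y(t)-x)$ introduces two error terms, each bounded by $L\alpha(t)\sum_j \pi_j(t+1)|y(t) - x_j(t)|$, for a total of $4L\alpha(t)\sum_j \pi_j(t+1)|y(t) - x_j(t)|$, exactly the stated term, plus $-2\alpha(t)\sum_j \pi_j(t+1)\big(g_j(x_j(t)) - g_j(x)\big)$ replaced — after one more Lipschitz swap — by $-2\alpha(t)\sum_j \pi_j(t+1)\big(g_j(y(t)) - g_j(x)\big)$ at the cost of already-counted error terms. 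I expect the \textbf{main obstacle} to be exactly this careful accounting: tracking the two places where the "subgradient-at-$x_j(t)$ versus function-at-$y(t)$" mismatch is corrected, making sure each correction is bounded by $L\alpha(t)\pi_j(t+1)|y(t)-x_j(t)|$ and that they sum to the coefficient $4L$ (and not $2L$ or $6L$), since the subgradient inequality is used with the "wrong" base point twice — once on the linear term and once on the function value. Everything else (expanding the square, the $\alpha^2$ term, summing over $j$ with stochastic weights) is routine.
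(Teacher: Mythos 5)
Your proposal is correct and follows essentially the same route as the paper's proof: expand the square of $y(t+1)-x = (y(t)-x) - \alpha(t)\iprod{\pi(t+1)}{{\bf d}(t)}$, bound the quadratic remainder by $\alpha^2(t)(n-\phi)L^2$, and handle the cross term by splitting $d_j(t)(y(t)-x)$ into $d_j(t)(y(t)-x_j(t)) + d_j(t)(x_j(t)-x)$, applying the subgradient inequality at $x_j(t)$, and then shifting $g_j(x_j(t))$ to $g_j(y(t))$ at a second cost of $L|y(t)-x_j(t)|$, which yields exactly the coefficient $4L$. The only cosmetic differences are that the paper bounds the shift $g_j(x_j(t))-g_j(y(t))$ via the subgradient $\delta_j(t)$ at $y(t)$ rather than Lipschitz continuity directly, and uses Cauchy--Schwarz rather than Jensen for the quadratic term; neither affects the argument.
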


The proof of Lemma \ref{BasicIter} can be found in \cite{Nedic2009}. We present the proof in Appendix \ref{app:ConvergenceAlgorithm}.
%
%
%
%
 For each $t$ and each $i\in \calV-\calF$, the distance between the consensus value $y(t)$ and the local estimate $x_i(t)$ is bounded from above.
\begin{lemma}
\label{uub}
Let $U=\max_{i\in \calV-\calF} x_i(0)$, and $u=\min_{i\in \calV-\calF} x_i(0)$. For every $i\in \calV-\calF$, a uniform bound on $|y(t)-x_i(t)|$ for $t\ge 1$ is given by:
\begin{align}
\label{consensus ub}
\left |y(t)-x_i(t)\right |\le \pth{n-\phi}\max \{|u|, |U|\}\gamma^{\lceil \frac{t}{\nu}\rceil} +\pth{n-\phi}L\sum_{r=1}^{t-1}\alpha(r-1) \gamma^{\lceil \frac{t-r}{\nu}\rceil}+2\alpha(t-1) L.
\end{align}
When $t=1$, $\sum_{r=1}^{t-1}\alpha(r-1) \gamma^{\lceil \frac{t-r}{\nu}\rceil}=0$ by convention.
\end{lemma}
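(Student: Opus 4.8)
The plan is to bound $|y(t)-x_i(t)|$ by writing $y(t)-x_i(t)$ as the difference of the ``frozen'' limiting vector and the actual estimate vector, and then exploit the exponential convergence rate of the transition matrices ${\bf \Phi}(t,r)$ from Theorem \ref{convergencerate}. Concretely, recall from (\ref{updates}) that
\[
x_i(t)=\sum_{j=1}^{n-\phi}{\bf \Phi}_{ij}(t-1,0)\,x_j(0)-\sum_{r=1}^{t}\alpha(r-1)\sum_{j=1}^{n-\phi}{\bf \Phi}_{ij}(t-1,r)\,d_j(r-1),
\]
while from (\ref{yupdate}) we have $y(t)=\iprod{\pi(0)}{{\bf x}(0)}-\sum_{r=1}^{t}\alpha(r-1)\iprod{\pi(r)}{{\bf d}(r-1)}$ (after replacing $\bar t$ by $t$, as the text does). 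Subtracting term by term gives
\[
y(t)-x_i(t)=\sum_{j=1}^{n-\phi}\bigl(\pi_j(0)-{\bf \Phi}_{ij}(t-1,0)\bigr)x_j(0)-\sum_{r=1}^{t}\alpha(r-1)\sum_{j=1}^{n-\phi}\bigl(\pi_j(r)-{\bf \Phi}_{ij}(t-1,r)\bigr)d_j(r-1).
\]
For the $r=t$ term the argument is slightly different: ${\bf \Phi}(t-1,t)={\bf I}_{n-\phi}$ by convention, so that term contributes $\alpha(t-1)(d_i(t-1)-\iprod{\pi(t)}{{\bf d}(t-1)})$, whose absolute value is at most $2\alpha(t-1)L$ since each subgradient $d_j$ is bounded by $L$ (because every $g_j$ is $L$-Lipschitz) and $\pi(t)$ is stochastic. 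This accounts for the last term $2\alpha(t-1)L$ in (\ref{consensus ub}).

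For the remaining terms I would apply the triangle inequality and then Theorem \ref{convergencerate}: $|\pi_j(r)-{\bf \Phi}_{ij}(t-1,r)|\le \gamma^{\lceil (t-r)/\nu\rceil}$ for $r\le t-1$, and $|\pi_j(0)-{\bf \Phi}_{ij}(t-1,0)|\le \gamma^{\lceil t/\nu\rceil}$. Summing over the $n-\phi$ coordinates $j$, bounding $|x_j(0)|\le \max\{|u|,|U|\}$ and $|d_j(r-1)|\le L$, yields exactly
\[
|y(t)-x_i(t)|\le (n-\phi)\max\{|u|,|U|\}\gamma^{\lceil t/\nu\rceil}+(n-\phi)L\sum_{r=1}^{t-1}\alpha(r-1)\gamma^{\lceil (t-r)/\nu\rceil}+2\alpha(t-1)L,
\]
which is the claimed bound. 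The case $t=1$ is handled by noting the middle sum is empty.

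The main thing to be careful about — and the only real obstacle — is bookkeeping the index ranges correctly: isolating the $r=t$ summand where the transition matrix degenerates to the identity (so that we do not overcount and can extract the clean $2\alpha(t-1)L$ term), and matching the ceiling exponents $\lceil (t-r)/\nu\rceil$ coming out of Theorem \ref{convergencerate} with those in (\ref{consensus ub}). Everything else is a routine application of the triangle inequality together with the two uniform bounds (subgradients bounded by $L$, initial values bounded by $\max\{|u|,|U|\}$) and the already-established exponential ergodicity of ${\bf \Phi}(t,r)$.
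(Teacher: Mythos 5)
Your proposal is correct and follows essentially the same route as the paper's proof: expand $x_i(t)$ and $y(t)$ via ${\bf \Phi}(t-1,\cdot)$ and $\pi(\cdot)$, isolate the $r=t$ summand where ${\bf \Phi}(t-1,t)={\bf I}$ to obtain the $2\alpha(t-1)L$ term, and bound the remaining differences with Theorem \ref{convergencerate} together with $|x_j(0)|\le\max\{|u|,|U|\}$ and $|d_j|\le L$. The index bookkeeping you flag is exactly the only delicate point, and you handle it as the paper does.
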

Note that the upper bound on $|y(t)-x_i(t)|$ in (\ref{consensus ub}) depends on $t$. In fact, this upper bound will diminish over time, as formally stated below.

\begin{lemma}
\label{consensus}
For each $i\in \calV-\calF$, the limit of $|y(t)-x_i(t)|$ exists and
$$\lim_{t\diverge} |y(t)-x_i(t)|=0.$$
\end{lemma}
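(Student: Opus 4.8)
The plan is to read everything off the explicit estimate in Lemma~\ref{uub} and show that each of its three summands tends to $0$. Abbreviate the right-hand side of (\ref{consensus ub}) as $T_1(t)+T_2(t)+T_3(t)$, where $T_1(t)=(n-\phi)\max\{|u|,|U|\}\,\gamma^{\lceil t/\nu\rceil}$, $T_2(t)=(n-\phi)L\sum_{r=1}^{t-1}\alpha(r-1)\,\gamma^{\lceil (t-r)/\nu\rceil}$, and $T_3(t)=2\alpha(t-1)L$. Since $0\le|y(t)-x_i(t)|\le T_1(t)+T_2(t)+T_3(t)$, it suffices to treat the three terms separately, and the convergence of the limit will then be automatic.

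First, $T_1(t)\to 0$ and $T_3(t)\to 0$ are immediate. Indeed $\gamma=1-\beta^{\nu}\in(0,1)$ because $\beta\in(0,1)$, and $\lceil t/\nu\rceil\diverge$, so $T_1$ decays geometrically; and the hypothesis $\sum_t\alpha^2(t)<\infty$ forces $\alpha(t)\to 0$, hence $T_3(t)\to 0$.

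The crux is $T_2(t)$, which is the discrete convolution of the diminishing stepsize sequence with the geometric kernel $m\mapsto\gamma^{\lceil m/\nu\rceil}$, whose total mass $C:=\sum_{m\ge 1}\gamma^{\lceil m/\nu\rceil}\le\nu\gamma/(1-\gamma)$ is finite. I would fix $t$ and split the inner sum at $r=\lfloor t/2\rfloor$. In the head $1\le r\le\lfloor t/2\rfloor$ I would bound $\alpha(r-1)\le\alpha(0)$ and, using monotonicity of $\lceil\cdot\rceil$ together with $\gamma<1$, $\gamma^{\lceil (t-r)/\nu\rceil}\le\gamma^{\lceil (t/2)/\nu\rceil}$; the head is then at most $(n-\phi)L\cdot(t/2)\,\alpha(0)\,\gamma^{\lceil (t/2)/\nu\rceil}\to 0$, since the geometric factor dominates the linear one. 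In the tail $\lfloor t/2\rfloor<r\le t-1$ I would instead use that the stepsizes are non-increasing to bound $\alpha(r-1)\le\alpha(\lfloor t/2\rfloor)$, together with $\sum_{\lfloor t/2\rfloor<r\le t-1}\gamma^{\lceil (t-r)/\nu\rceil}\le C$; the tail is then at most $(n-\phi)L\,C\,\alpha(\lfloor t/2\rfloor)\to 0$ because $\alpha(t)\to 0$. Hence $T_2(t)\to 0$, and combining with the previous paragraph gives $\lim_{t\diverge}|y(t)-x_i(t)|=0$; in particular the limit exists and equals $0$.

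I expect the only real obstacle to be this last step: one must play the divergence of the partial sums $\sum_r\alpha(r-1)$ off against the geometric decay of the kernel, and the head/tail split above is the standard device for doing so (the same manipulation used in the convergence analyses of \cite{Nedic2009,ram2010distributed}). Alternatively, I could simply invoke the known fact that the convolution of a bounded sequence converging to $0$ with an $\ell^1$ sequence converges to $0$, and apply it to $\alpha(\cdot)$ and the kernel $\gamma^{\lceil \cdot/\nu\rceil}$.
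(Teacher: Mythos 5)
Your proposal is correct and follows essentially the same route as the paper's proof: the same three-term decomposition of the bound in Lemma~\ref{uub}, with the middle convolution term handled by splitting the sum at $r\approx t/2$, bounding the head via $\alpha(r-1)\le\alpha(0)$ and geometric decay of the kernel, and the tail via monotonicity of the stepsizes and summability of the kernel. The only cosmetic difference is that you bound the head by (number of terms)$\times$(largest term) rather than summing the geometric series as the paper does; both give a quantity tending to $0$.
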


Our main convergence result is stated below.
\begin{theorem}[Convergence]
For each $i\in \calV-\calF$, $\{x_i(t)\}_{t=0}^{\infty}$ converges to the same optimum in $X$, i.e.,
$$\lim_{t\diverge}|x_i(t)-x^*|=0,$$ where $x^*\in X$.
\label{ConAlgorithm1}
\end{theorem}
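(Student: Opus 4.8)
The plan is to run a Fej\'er-monotonicity (``almost-supermartingale'') argument on the squared distance from the consensus trajectory $\{y(t)\}$ of~(\ref{yupdate}) to a common optimum, and then transfer the conclusion to each agent through Lemma~\ref{consensus}. Since the input functions fall in Case~2, the set $\cap_{i=1}^{k}X_i$ is nonempty; moreover, for any $z\in\cap_{i=1}^{k}X_i$ and any $x$ one has $h(x)-h(z)=\frac1k\sum_{i=1}^{k}\big(h_i(x)-\min h_i\big)\ge 0$, with equality exactly when $x\in\cap_{i=1}^{k}X_i$, so in fact $X=\cap_{i=1}^{k}X_i$. Fix such a $z$. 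Because $h_i(x)\ge h_i(z)$ for all $i$ and ${\bf A}\ge\zeros$, every non-faulty agent $j$ satisfies $g_j(y(t))-g_j(z)\ge 0$; hence in Lemma~\ref{BasicIter}, applied with $x=z$, the entire ``descent'' term $\sum_{j}\pi_j(t+1)\big(g_j(y(t))-g_j(z)\big)$ is nonnegative.

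The key quantitative step is to bound this descent term below by a fixed positive multiple of $h(y(t))-h(z)$. By Lemma~\ref{lblimiting} there is $\calI_{t+1}\subseteq\calV-\calF$ with $|\calI_{t+1}|\ge\max\{sp({\bf A}),f+1\}\ge sp({\bf A})$ and $\pi_j(t+1)\ge\beta^{\nu}$ for $j\in\calI_{t+1}$. Discarding the nonnegative summands outside $\calI_{t+1}$ and writing $g_j=\sum_i{\bf A}_{ij}h_i$,
\[
\sum_{j=1}^{n-\phi}\pi_j(t+1)\big(g_j(y(t))-g_j(z)\big)\ \ge\ \beta^{\nu}\sum_{i=1}^{k}\Big(\sum_{j\in\calI_{t+1}}{\bf A}_{ij}\Big)\big(h_i(y(t))-h_i(z)\big).
\]
Since $|\calI_{t+1}|\ge sp({\bf A})$, the definition of the sparsity parameter guarantees $\sum_{j\in\calI_{t+1}}{\bf A}_{ij}>0$ for every $i$; as there are finitely many admissible index sets, $c_{\min}:=\min\big\{\sum_{j\in J}{\bf A}_{ij}:\ J\subseteq\calV-\calF,\ |J|\ge sp({\bf A}),\ 1\le i\le k\big\}>0$, so the right-hand side is at least $\beta^{\nu}c_{\min}\sum_i\big(h_i(y(t))-h_i(z)\big)=\eta\,\big(h(y(t))-h(z)\big)$ with $\eta:=k\beta^{\nu}c_{\min}>0$.

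Now invoke Lemma~\ref{BasicIter} with $a_t=|y(t)-z|^2$, $b_t=2\eta\,\alpha(t)\big(h(y(t))-h(z)\big)\ge 0$, and $c_t=4L\alpha(t)\sum_j\pi_j(t+1)|y(t)-x_j(t)|+\alpha^2(t)(n-\phi)L^2$. To apply Lemma~\ref{stochatic convergence} it suffices that $\sum_t c_t<\infty$: the $\alpha^2$ part is summable by hypothesis, and for the first part we substitute the explicit bound of Lemma~\ref{uub} for $|y(t)-x_j(t)|$ and then bound the resulting geometric/convolution double sum using $\sum_t\alpha^2(t)<\infty$, $\sum_{s\ge 1}\gamma^{\lceil s/\nu\rceil}<\infty$, and $ab\le\tfrac12(a^2+b^2)$ — a routine estimate. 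We conclude that $|y(t)-z|^2$ converges and $\sum_t\alpha(t)\big(h(y(t))-h(z)\big)<\infty$. Together with $\sum_t\alpha(t)=\infty$ and $h(y(t))-h(z)\ge 0$, this forces $\liminf_{t}\big(h(y(t))-h(z)\big)=0$. Since $\{y(t)\}$ is bounded (its distance to $z$ converges), some subsequence satisfies $y(t_m)\to\bar x$ with $h(\bar x)=h(z)=\min h$, i.e.\ $\bar x\in X=\cap_i X_i$. Re-running the argument with $z$ replaced by $\bar x$ shows $|y(t)-\bar x|^2$ converges; as it tends to $0$ along $\{t_m\}$, the limit is $0$, so $y(t)\to\bar x\in X$. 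Finally, by Lemma~\ref{consensus}, $|x_i(t)-y(t)|\to 0$ for every $i\in\calV-\calF$, hence $x_i(t)\to\bar x=:x^{*}\in X$ for all such $i$ and $|x_i(t)-x_j(t)|\le|x_i(t)-y(t)|+|y(t)-x_j(t)|\to 0$, which is the assertion.

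I expect the main obstacle to be the identification of the limit and the passage from a subsequence to the full sequence: this is precisely where Fej\'er monotonicity — convergence of $|y(t)-z|^2$ for \emph{every} $z\in X$, not merely for one fixed reference point — is indispensable, and it rests on the two structural facts above, namely that Case~2 yields $X=\cap_i X_i$ and that any $sp({\bf A})$ columns of ${\bf A}$ sum to a strictly positive vector, so the $\pi(t+1)$-weighted aggregate of the $g_j$'s retains a uniformly positive weight on every input function $h_i$.
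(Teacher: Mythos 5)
Your proposal is correct and follows the same architecture as the paper's proof: the same $a_t,b_t,c_t$ decomposition fed into Lemma~\ref{stochatic convergence}, the same lower bound on the descent term via Lemma~\ref{lblimiting} and the sparsity parameter (your $\eta=k\beta^{\nu}c_{\min}$ is the paper's $k\beta^{\nu}C_2$), summability of $c_t$ via Lemma~\ref{uub} (the paper's Lemma~\ref{c1}), and the final transfer through Lemma~\ref{consensus}. The one place you genuinely diverge is the identification of the limit. The paper first asserts that $y(t)$ converges to some $y$ --- justified only by the remark that $|y(t)-x'|$ converges for fixed $x'\in X$ together with the dynamics (\ref{yupdate}) --- and then rules out $y\notin X$ by showing $\sum_t b_t$ would diverge. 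You instead extract $\liminf_t\big(h(y(t))-h^*\big)=0$ directly from $\sum_t b_t<\infty$ and $\sum_t\alpha(t)=\infty$, pass to a subsequential limit $\bar x\in X$ by boundedness and continuity, and then upgrade to full convergence by re-applying the Fej\'er property at $z=\bar x$. This is the standard textbook endgame; it buys you an argument that does not rest on the somewhat informal ``it is easy to see that $y(t)$ also converges'' step (which in the paper implicitly uses that the iterates are scalar and that consecutive increments vanish), and it would survive unchanged in higher dimensions. The only part you leave as a sketch is the verification that $\sum_t c_t<\infty$, which is exactly the content of the paper's Lemma~\ref{c1}; your listed ingredients are the right ones.
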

We provide a sketch of the convergence proof below. Formal proof can be found in Appendix \ref{app:ConvergenceAlgorithm}.\\

Recall that each $g_i(\cdot)$ is defined as
\begin{align*}
g_i(x)={\bf A}_{1i}h_1(x)+{\bf A}_{2i}h_2(x)+\ldots +{\bf A}_{ki}h_k(x),
\end{align*}
for $i\in \calV$, where ${\bf A}_{ji}\ge 0$ and $\sum_{j=1}^k{\bf A}_{ji}=1$. Let $Y^i=\argmin~ g_i(x)$ and $Y_j^i=\argmin~{\bf A}_{ji}h_j(x)$ for $j=1, \ldots, k$. Since for each $j\in \{1, \ldots, k\}$ such that ${\bf A}_{ji}=0$, $\argmin~{\bf A}_{ji}h_j(x)=0$ is a constant function over the whole real line, it holds that $Y_j^i=\reals$. Since positive constant scaling does not affect the optimal set of a function, for each $j\in \{1, \ldots, k\}$ such that ${\bf A}_{ji}>0$, it holds that $Y_j^i=X_j$.  In addition, because $h_1(x), \ldots, h_k(x)$ are solution redundant functions, i.e., $\cap_{j=1}^k X_j\not=\O$,  functions ${\bf A}_{1i}h_1(x), \ldots, {\bf A}_{ki}h_k(x)$ are also solution redundant.  It can be shown (formally proved in Appendix \ref{app: Connection}) that
$$Y^i=\cap_{j: {\bf A}_{ji}>0}X_j\supseteq \cap_{j=1}^k X_j=X, ~ \text{for all}~ i\in \calV.$$
Let $x^{\prime}\in X$. Define $g_j^*$ as the optimal value of function $g_j(\cdot)$ for each $j\in \calV$. We have
\begin{align}
\label{almost monotone 1}
\nonumber
\left |y(t+1)-x^{\prime}\right |^2&\le \left |y(t)-x^{\prime}\right |^2+4L\alpha(t) \sum_{j=1}^{n-\phi} \pi_j(t+1)\left |y(t)-x_j(t)\right |\\
\nonumber
&\quad-2\alpha(t) \sum_{j=1}^{n-\phi} \pi_j(t+1)\pth{g_j\pth{y(t)}-g_j(x^{\prime})}+\alpha^2(t)(n-\phi)L^2\\
\nonumber
&\overset{(a)}{=}\left |y(t)-x^{\prime}\right |^2+4L\alpha(t) \sum_{j=1}^{n-\phi} \pi_j(t+1)\left |y(t)-x_j(t)\right |\\
&\quad-2\alpha(t) \sum_{j=1}^{n-\phi} \pi_j(t+1)\pth{g_j\pth{y(t)}-g_j^*}+\alpha^2(t)(n-\phi)L^2.
\end{align}
Equality $(a)$ holds because of $x^{\prime}\in X\subseteq Y^j$ for each $j\in \calV$, then $g_j(x^{\prime})=g_j^*$.\\

For each $t\ge 0$, define
\begin{align*}
a_t&=|y(t)-x^{\prime}|^2,\\
b_t&= 2\alpha(t) \sum_{j=1}^{n-\phi} \pi_j(t+1)\pth{g_j(y(t))-g_j^*},\\
c_t&=4L\alpha(t) \sum_{j=1}^{n-\phi} \pi_j(t+1)|y(t)-x_j(t)|+\alpha^2(t)(n-\phi)L^2.
\end{align*}
It is easy to see that $a_t\ge 0$ and $c_t\ge 0$ for each $t$.
Since $g_j^*$ is the optimal value of function $g_j(\cdot)$, it holds that $b_t\ge 0$ for each $t$. Thus, $\{a_t\}_{t=0}^{\infty}, \{b_t\}_{t=0}^{\infty}$ and $\{c_t\}_{t=0}^{\infty}$ are three non-negative sequences. By (\ref{almost monotone 1}), it holds that
$$a_{t+1}\le a_t-b_t+c_t ~~~~\text{for each}~t\ge 0.$$
To apply Lemma \ref{stochatic convergence}, we need to show that $\sum_{t=0}^{\infty} c_t<\infty$.
In fact, the following lemma holds.
\begin{lemma}
\label{c1}
\begin{align*}
\sum_{t=0}^{\infty}\alpha(t) \sum_{j=1}^{n-\phi} \pi_j(t+1)|y(t)-x_j(t)|<\infty.
\end{align*}
\end{lemma}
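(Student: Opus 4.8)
\textbf{Proof proposal for Lemma~\ref{c1}.}

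The plan is to first replace the $\pi$-weighted average by a maximum and then substitute the explicit estimate of Lemma~\ref{uub}. Since $\pi(t+1)$ is a row-stochastic vector, $\pi_j(t+1)\ge 0$ and $\sum_{j=1}^{n-\phi}\pi_j(t+1)=1$, so $\sum_{j=1}^{n-\phi}\pi_j(t+1)\left|y(t)-x_j(t)\right|\le \max_{j\in\calV-\calF}\left|y(t)-x_j(t)\right|$; hence it suffices to prove $\sum_{t\ge 1}\alpha(t)\max_{j\in\calV-\calF}\left|y(t)-x_j(t)\right|<\infty$, the $t=0$ contribution being a single finite summand. By Lemma~\ref{uub}, and because the bound there is uniform in $j$, this series is dominated by
\[
(n-\phi)\max\{|u|,|U|\}\sum_{t\ge 1}\alpha(t)\gamma^{\lceil t/\nu\rceil}
\;+\;(n-\phi)L\sum_{t\ge 1}\alpha(t)\sum_{r=1}^{t-1}\alpha(r-1)\gamma^{\lceil (t-r)/\nu\rceil}
\;+\;2L\sum_{t\ge 1}\alpha(t)\alpha(t-1),
\]
and I would argue that each of these three series is finite.

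The first and third are immediate. From $\sum_t\alpha^2(t)<\infty$ we get $\alpha(t)\to 0$, so $C_\alpha:=\sup_t\alpha(t)<\infty$, and since $0<\gamma<1$ the geometric factor is summable, $K:=\sum_{s\ge 1}\gamma^{\lceil s/\nu\rceil}=\tfrac{\nu\gamma}{1-\gamma}<\infty$; thus the first series is at most $C_\alpha K$. For the third, AM--GM gives $\alpha(t)\alpha(t-1)\le\tfrac12\pth{\alpha^2(t)+\alpha^2(t-1)}$, so the third series is at most $2L\sum_t\alpha^2(t)<\infty$.

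The real work is the middle (double) series, which has a discrete-convolution shape. I would again apply AM--GM to the stepsize product, $\alpha(t)\alpha(r-1)\le\tfrac12\pth{\alpha^2(t)+\alpha^2(r-1)}$, and split the double sum into two nonnegative pieces. In the $\alpha^2(t)$ piece, performing the inner sum over $r$ first gives $\sum_{r=1}^{t-1}\gamma^{\lceil (t-r)/\nu\rceil}\le K$, so that piece is at most $K\sum_t\alpha^2(t)<\infty$. In the $\alpha^2(r-1)$ piece, interchanging the order of summation (valid since every term is nonnegative) gives $\sum_{r\ge 1}\alpha^2(r-1)\sum_{t>r}\gamma^{\lceil (t-r)/\nu\rceil}\le K\sum_{r\ge 1}\alpha^2(r-1)<\infty$. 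Adding the three bounds completes the proof. (Equivalently, the middle series can be dispatched in one line by Young's convolution inequality followed by Cauchy--Schwarz against $\alpha\in\ell^2$.) The only mildly delicate points are the interchange of summations and the geometric summability of $\gamma^{\lceil\cdot/\nu\rceil}$, both routine; note that, besides Lemma~\ref{uub}, the argument uses only the square-summability $\sum_t\alpha^2(t)<\infty$ of the stepsizes.
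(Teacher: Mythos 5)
Your proposal is correct and follows essentially the same route as the paper's own proof: bound the $\pi$-weighted average by the uniform estimate of Lemma~\ref{uub}, apply $\alpha(t)\alpha(r-1)\le\tfrac12\pth{\alpha^2(t)+\alpha^2(r-1)}$ to the convolution term, and finish by summing the inner geometric series and interchanging the order of summation. The only cosmetic differences are that you pass through $\max_j$ rather than keeping the weighted average, and you bound $\alpha(t)$ by $\sup_t\alpha(t)$ rather than by $\alpha(1)$ via monotonicity; neither changes the substance.
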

The proof of Lemma \ref{c1} is presented in Appendix \ref{app:ConvergenceAlgorithm}. In addition, since $\sum_{t=0}^{\infty}\alpha^2(t)<\infty$, it holds that
$$(n-\phi)L^2\sum_{t=0}^{\infty}\alpha^2(t)<\infty.$$
Thus, we get
\begin{align*}
\nonumber
\sum_{t=0}^{\infty} c_t&=\sum_{t=0}^{\infty}\pth{4L\alpha(t) \sum_{j=1}^{n-\phi} \pi_j(t+1)|y(t)-x_j(t)|+\alpha^2(t)(n-\phi)L^2}\\
\nonumber
&=4L\sum_{t=0}^{\infty}\pth{\alpha(t) \sum_{j=1}^{n-\phi} \pi_j(t+1)|y(t)-x_j(t)|}+(n-\phi)L^2\sum_{t=0}^{\infty}\alpha^2(t)\\
&< \infty.
\end{align*}
Therefore, applying Lemma \ref{stochatic convergence} to the sequences $\{a_t\}_{t=0}^{\infty}, \{b_t\}_{t=0}^{\infty}$ and $\{c_t\}_{t=0}^{\infty}$, we have that for any $x^{\prime}\in X$,
$a_t=|y(t)-x^{\prime}|$ converges, and
\begin{align*}
\sum_{t=0}^{\infty}b_t=\sum_{t=0}^{\infty}\alpha(t) \sum_{j=1}^{n-\phi}\pi_j(t+1)\pth{g_j(y(t))-g_j^*}<\infty.
\end{align*}
Since $|y(t)-x^{\prime}|$ converges for any fixed $x^{\prime}\in X$, by definition of sequence convergence and the dynamic of $y(t)$ in (\ref{yupdate}), it is easy to see that $y(t)$ also converges. Let $\lim_{t\diverge}y(t)=y$. Next we show that $y\in X$.\\

%
%
%
%
%
Let $\calI_{t+1}\subseteq \calV-\calF$ be the set of indices such that for each $j\in \calI_{t+1}$, $\pi_j(t+1)\ge \beta^{\nu}$. As $G(\calV, \calE)$ satisfies Condition 1,  $|\calI_{t+1}|\ge \max \{k^{\prime}, f+1\}$.  Since $g_j(y(t))-g_j^*\ge 0$ for all $j$,
then
\begin{align*}
\nonumber
\sum_{j=1}^{n-\phi} \pi_j(t+1)\pth{g_j(y(t))-g_j^*}&\ge \sum_{j\in \calI_{t+1}} \pi_j(t+1)\pth{g_j(y(t))-g_j^*}\\
\nonumber
&\ge \beta^{\nu}\sum_{j\in \calI_{t+1}}\pth{g_j(y(t))-g_j^*}\\
\nonumber
&=\beta^{\nu}\sum_{j\in \calI_{t+1}}\sum_{i=1}^{k}{\bf A}_{ij}\pth{h_i(y(t))-h_i^*}\\
\nonumber
&=\beta^{\nu}\sum_{i=1}^{k}\pth{\sum_{j\in \calI_{t+1}}{\bf A}_{ij}}\pth{h_i(y(t))-h_i^*}\\
&\ge k\beta^{\nu}C_2 \pth{h(y(t))-h^*},
\end{align*}
where
\begin{align*}
C_2=\min_{\calI\subseteq \calV: \,  |\calI|\ge \max \{k^{\prime}, f+1\}} \sum_{i\in \calI} {\bf A}_{ij},
\end{align*}
and the last inequality follows from the fact that $h_i(y(t))-h_i^*\ge 0$. In addition, as $sp({\bf A})=k^{\prime}$, then $\sum_{i\in \calI} {\bf A}_{ij}>0$ for every $\calI\subseteq \calV: \,  |\calI|\ge \max \{k^{\prime}, f+1\}$. Since $\bf A$ is finite, $C_2$ is well-defined and $C_2>0$.
%
If $y\notin X$, it can be shown that $k\beta^{\nu}C_2 \pth{h(y(t))-h^*}=\infty$. This contradicts the fact that $\sum_{t=0}^{\infty}b_t<\infty$. Thus, $y\in X$. \\

Therefore, we conclude that limit of $|x_i(t)-y|$ exists and $$\lim_{t\diverge} |x_i(t)-y|=0,$$
proving Theorem \ref{ConAlgorithm1}.

\section{Summary and Conclusion}\label{sec:summary}

In this report, we introduce the condition-based approach to Byzantine multi-agent optimization. We have shown that when there is enough redundancy in the local cost functions, or in the local optima, Problem \ref{goal} can be solved iteratively.

Two slightly different variants are considered: condition-based Byzantine multi-agent optimization with side information and condition-based Byzantine multi-agent optimization without side information. For the former, when side information is available at each agent, a decoding-based algorithm is proposed, assuming each input function is differentiable. This algorithm combines the gradient method with the decoding procedure introduced in \cite{candes2005decoding} by choosing  proper ``generator matrices" as job assignment matrices. With such a decoding subroutine, our algorithm essentially performs the gradient method, where gradient computation is processed distributedly over the multi-agent system.
When side information is not available at each agent, we propose a simple consensus-based algorithm in which each agent carries minimal state across iterations. This consensus-based algorithm solves Problem \ref{goal} under the additional assumption over input functions that all input functions share at least one common optimum. Although the consensus-based algorithm can only solve Problem \ref{goal} for a restricted class of input functions, nevertheless, as each non-faulty agent does not need to store the job matrix ${\bf A}$ throughout execution and does not need to perform the decoding procedure at each iteration, the requirements on memory and computation are less stringent comparing to the decoding-based algorithm.  In addition, in contrast to the decoding-based algorithm, the consensus-based algorithm also works for nonsmooth input functions. Thus, the consensus-based algorithm may be more practical in some applications.

\bibliographystyle{abbrv}
\bibliography{PSDA_DL}

\newpage
\appendix

\setlength {\parskip}{6pt}

\centerline{\Large\bf Appendices}

\section{Connection between $X$ and $X_j$'s for $j=1, \ldots, k$}
\label{app: Connection}

Recall that $X_j$ is the set of optimal solution(s) of input function $h_j(x)$, for $j=1, \ldots, k$; and that $X$ is the optimal set of function $\frac{1}{k}\sum_{j=1}^k h_j(x)$ in Problem \ref{goal}. Propositions \ref{p1} and \ref{p2} are used in proving the correctness of other results in this report.

\begin{proposition}\cite{SuV15MultiAgentPartI}
\label{p1}
The optimal set $X$ of Problem \ref{goal} is contained in the convex hull of the union of all $X_j$'s, i.e.,
\begin{align}
X\subseteq Cov\pth{\cup_{j=1}^k X_j},
\end{align}
where $Cov\pth{Z}$ is the convex hull of set $Z$.
\end{proposition}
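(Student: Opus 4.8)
The plan is to exploit the fact that everything here lives on the real line. Since each $h_j$ is admissible, $X_j=\argmin h_j$ is a nonempty compact subset of $\reals$, and by convexity it is an interval, say $X_j=[\ell_j,u_j]$. Hence $Cov\pth{\cup_{j=1}^k X_j}=[a,b]$, where $a=\min_{1\le j\le k}\ell_j$ and $b=\max_{1\le j\le k}u_j$ (both attained, as $k$ is finite). It therefore suffices to prove $X\subseteq[a,b]$, which I would do by contradiction.

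Suppose there is $x^*\in X$ with $x^*>b$; the case $x^*<a$ is entirely symmetric, with ``non-decreasing on $[u_j,\infty)$'' replaced by ``non-increasing on $(-\infty,\ell_j]$''. I would use the elementary fact that a convex function on $\reals$ is non-decreasing to the right of its minimizer set: for each $j$, since $u_j\le b<x^*$, this gives $h_j(b)\le h_j(x^*)$. Next, fix an index $j_0$ with $u_{j_0}=b$. Then $x^*>b=u_{j_0}=\max X_{j_0}$, so $x^*\notin X_{j_0}$, and since convexity rules out $h_{j_0}(x^*)<\min h_{j_0}$ we get the \emph{strict} inequality $h_{j_0}(x^*)>\min h_{j_0}=h_{j_0}(b)$. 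Averaging,
\[
h(b)=\frac1k\sum_{j=1}^k h_j(b)<\frac1k\sum_{j=1}^k h_j(x^*)=h(x^*),
\]
contradicting $x^*\in X=\argmin h$. Hence $X\subseteq[a,b]=Cov\pth{\cup_{j=1}^k X_j}$.

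The only step that is not purely mechanical is the strict inequality $h_{j_0}(x^*)>\min h_{j_0}$, and I expect it to be the ``hard part'' merely in that it must be stated with care: it holds precisely because $u_{j_0}$ is the \emph{right endpoint} of the minimizer interval of $h_{j_0}$, so any point strictly to its right cannot be a minimizer. An equivalent route avoids function values and uses subgradients: at $x^*\in X$ one has $0\in\partial h(x^*)=\frac1k\sum_{j}\partial h_j(x^*)$ (valid since each $h_j$ is finite-valued convex on $\reals$), so there are $s_j\in\partial h_j(x^*)$ with $\sum_j s_j=0$; for $x^*>b$ every such $s_j$ is $\ge 0$, and $s_{j_0}>0$ (else $x^*$ would minimize $h_{j_0}$), forcing $\sum_j s_j>0$, a contradiction. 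Either version is short and needs nothing beyond the monotonicity properties of scalar convex functions.
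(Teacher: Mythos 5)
Your proof is correct. Note that the paper itself gives no proof of Proposition \ref{p1}; it is imported from Part I (\cite{SuV15MultiAgentPartI}) and merely stated, so there is no in-paper argument to compare against. Your reduction is clean and fully adapted to the scalar setting of this report: each $X_j$ is a compact interval $[\ell_j,u_j]$ (nonempty, compact, and convex as the minimizer set of a convex function on $\reals$), so $Cov\pth{\cup_{j=1}^k X_j}=[a,b]$ with $a=\min_j\ell_j$, $b=\max_j u_j$, and the contradiction via monotonicity of a scalar convex function to the right of its minimizer set, with strictness supplied by the index $j_0$ attaining $b$, is airtight. Two minor remarks: the strict inequality $h_{j_0}(x^*)>\min h_{j_0}$ needs no convexity at all --- it is just the definition of $X_{j_0}$ as the set of minimizers together with $x^*\notin X_{j_0}$; and your subgradient variant, while equally valid, relies on the sum rule for subdifferentials, which is heavier machinery than the statement requires. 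The function-value version is preferable here because it uses nothing beyond one-dimensional convexity, whereas a proof in Part I's more general setting (arbitrary convex combinations, and statements not tied to intervals) would more likely proceed via the subgradient or separation route; your interval argument does not generalize beyond $\reals$, but that is all this paper needs.
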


The above proposition holds for any collection of $k$ admissible input functions. A stronger connection holds for solution-redundant input functions, as stated below.

\begin{proposition}
\label{p2}
When all the input functions share at least one common optimum, i.e., $\cap_{j=1}^{k}X_j\not=\O$, then
\begin{align}
X=\cap_{j=1}^{k}X_j.
\end{align}
\end{proposition}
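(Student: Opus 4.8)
The plan is to prove the two inclusions $X \subseteq \cap_{j=1}^k X_j$ and $\cap_{j=1}^k X_j \subseteq X$ separately, using the convexity and nonnegativity structure of $h = \frac{1}{k}\sum_{j=1}^k h_j$ together with the standing hypothesis $\cap_{j=1}^k X_j \neq \emptyset$. The second inclusion is the easy direction: if $x^* \in \cap_{j=1}^k X_j$, then $h_j(x^*) = \min h_j$ for every $j$, so $h(x^*) = \frac{1}{k}\sum_j \min h_j \le \frac{1}{k}\sum_j h_j(x)$ for any $x$, hence $x^* \in X$. This already shows $\cap_{j=1}^k X_j \subseteq X$ and, in particular, that $X \neq \emptyset$.

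For the first (main) inclusion, I would argue by contradiction, or directly via a pointwise comparison. Fix any $x^* \in \cap_{j=1}^k X_j$ (nonempty by hypothesis) and let $h_j^* = h_j(x^*) = \min h_j$. For an arbitrary $y \in X$ we have, for every $j$, $h_j(y) - h_j^* \ge 0$, and summing gives
\begin{align*}
0 \;\le\; \sum_{j=1}^k \bigl(h_j(y) - h_j^*\bigr) \;=\; k\,h(y) - \sum_{j=1}^k h_j^* \;=\; k\bigl(h(y) - h(x^*)\bigr) \;\le\; 0,
\end{align*}
where the last inequality uses that $y$ is a minimizer of $h$ and $x^*$ is feasible. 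Hence every summand $h_j(y) - h_j^*$ is zero, i.e. $h_j(y) = \min h_j$ for all $j$, so $y \in X_j$ for every $j$, giving $y \in \cap_{j=1}^k X_j$. Combining the two inclusions yields $X = \cap_{j=1}^k X_j$.

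I do not anticipate a serious obstacle here: the only subtlety is making sure that the hypothesis $\cap_{j=1}^k X_j \neq \emptyset$ is genuinely used, namely to produce the reference point $x^*$ so that the chain of inequalities can be closed; without a common optimum the same computation only gives the weaker Proposition \ref{p1}. One should also note that admissibility (each $h_j$ convex, $L$-Lipschitz, with nonempty compact $X_j$) guarantees $h$ is admissible, so $X = \argmin h$ is nonempty and the statement is not vacuous, but the core argument is just the nonnegativity-of-a-sum-of-nonnegative-terms observation above.
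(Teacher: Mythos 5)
Your proof is correct and follows essentially the same route as the paper: the easy inclusion $\cap_{j=1}^k X_j \subseteq X$ is argued identically, and your direct ``sum of nonnegative terms equals zero, hence each term is zero'' step for $X \subseteq \cap_{j=1}^k X_j$ is just the contrapositive of the paper's argument by contradiction (which picks a $j_0$ with $h_{j_0}(x')>h_{j_0}^*$ and derives $\sum_j h_j(x')>\sum_j h_j^*$). No gap; the hypothesis $\cap_{j=1}^k X_j\neq\O$ is used in both exactly where you say it is, to identify $\sum_j h_j^*$ with the minimum of $\sum_j h_j$.
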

\begin{proof}

We first show that $\cap_{j=1}^{k}X_j\subseteq X$. Let $h_j^*$ be the optimal value of function $h_j(x)$ for $j=1, \ldots, k$, and let $h^*$ be the optimal value of function $\frac{1}{k}\sum_{j=1}^k h_j(x)$. Since $\cap_{j=1}^{k}X_j\not=\O$, let $x_0\in \cap_{j=1}^{k}X_j$.
Then for all $x\in \reals$
\begin{align*}
\frac{1}{k}\sum_{j=1}^k h_j(x_0)=\frac{1}{k}\sum_{j=1}^kh_j^*\le h^*\le \frac{1}{k}\sum_{j=1}^k h_j(x).
\end{align*}
So we know
\begin{align*}
\sum_{j=1}^k h_j(x_0)= \sum_{j=1}^k h_j^*=h^*.
\end{align*}
Thus  $x_0\in X$ and $\cap_{j=1}^{k}X_j\subseteq X$. 

Next we show $X\subseteq \cap_{j=1}^{k}X_j$. We prove this by contradiction. Suppose on the contrary that $X\not\subseteq \cap_{j=1}^{k}X_j$, then there exists $x^{\prime}\in X$ such that $x^{\prime}\not\in \cap_{j=1}^{k}X_j$. The latter implies that $x^{\prime}\not\in X_{j_0}$ for some $j_0\in \{1, \ldots, k\}$. Then
\begin{align*}
\sum_{j=1}^k h_j(x^{\prime})&=\pth{\sum_{1\le j\le k, j\not=j_0} h_j(x^{\prime})}+h_{j_0}(x^{\prime})\\
&\ge \pth{\sum_{1\le j\le k, j\not=j_0} h_j^*}+h_{j_0}(x^{\prime})\\
&> \pth{\sum_{1\le j\le k, j\not=j_0} h_j^*}+h_{j_0}^*\\
&=\sum_{j=1}^k h_j^*=h^*.
\end{align*}
So $x^{\prime}\notin X$, which leads to a contradiction. Thus $X\subseteq \cap_{j=1}^kX_j$.

Therefore, $X=\cap_{j=1}^kX_j$.

\raggedleft $\square$
\end{proof}

\section{Condition-based Byzantine multi-agent optimization without side information}
\label{app:SRNC}
%
\subsection*{Proof of Lemma \ref{sparsity}}
\begin{proof}
Let $sp({\bf A})=k^{\prime}$, by definition of $sp\pth{\bf A}$, the sum vector of any collection of $k^{\prime}$ columns of ${\bf A}$ is component-wise positive. Suppose there exists a row, say $i_0$, that contains at least $k^{\prime}$ zero entries. Let $j_1, j_2, \ldots, j_{k^{\prime}}$ be any $k^{\prime}$ columns in $\bf A$, wherein the $i_0$--coordinate of each column is zero.
Thus the $i_0$--th coordinate of $\sum_{r=1}^{k^{\prime}} {\bf A}_{j_r}$ is zero, contradicting the hypothesis that $sp({\bf A})=k^{\prime}$. In addition, if every row contains more than $k^{\prime}-1$ zeros, using the same argument it can be shown that $k^{\prime}$ is not the smallest integer.

Conversely, we need to show that if there are at most $k^{\prime}-1$ zero entries in each row of $\bf A$ and there exists one row contains exactly $k^{\prime}-1$ zero entries, then $sp({\bf A})=k^{\prime}$. Let $j_1, \ldots, j_{k^{\prime}}$ be any collection of $k^{\prime}$ columns of $\bf A$. For each coordinate, at least one of the chosen $k^{\prime}$ columns contains positive entry in that coordinate. So we have $\sum_{r=1}^{k^{\prime}} {\bf A}_{j_r}>\zeros$ componentwise. By definition of $sp({\bf A})$, we know $sp({\bf A})\le k^{\prime}$. In addition, let $i_0$ be a row in which there are exactly $k^{\prime}-1$ zeros, then there exists a collection of $k^{\prime}-1$ columns whose $i_0$--th coordinate are all zeros, and that the sum of the $k^{\prime}-1$ columns also has the $i_0$--th coordinate being 0. Thus $sp({\bf A})=k^{\prime}$.

\raggedleft $\square$
\end{proof}

\subsection*{Proof of Theorem \ref{scsize}}
\begin{proof}

We first show that if Problem \ref{goal} is solvable, then a source component must exist in every reduced graph of $G(\calV,\calE)$, containing at least $f+1$ nodes. Then we show that when $k^{\prime}>f+1$, the source component must contain at least $k^{\prime}$ nodes. These two claims together show that if Problem \ref{goal} is solvable, then a source component must exist in every reduced graph of $G(\calV,\calE)$, containing at least $\max \{f+1, k^{\prime}\}$ nodes, proving the theorem.

\begin{definition}[Condition 2]
Given a graph $G(\calV, \calE)$, for any node partition $L, R, C, F$ of $G(\calV, \calE)$ such that $L, R$ are nonempty and $|F|\le f$, one of the following must hold:
(1) there exists a node $i\in L$ that has at least $f+1$ incoming neighbors in $R\cup C$, i.e., $|N_i^{-}\cap (R\cup C)|\ge f+1$; or
(2) there exists a node $j\in R$ that has at least $f+1$ incoming neighbors in $L\cup C$, i.e., $|N_j^{-}\cap (L\cup C)|\ge f+1$.
\end{definition}
Now we show that Condition $2$ is a necessary condition
for Problem 1. Suppose graph $G(\calV, \calE)$ does not satisfy Condition 2 and there exists a correct algorithm $\calA$ solving Problem 1 for solution redundant input functions. Recall that $X_i=\argmin~h_i(x)$ for each $i=1, \ldots, k$, and $X=\argmin~h(x)$. Consider the input functions $h_i(x)$ for all $i=1, \ldots, k$ such that each $h_i(x)$ is admissible with optimal set $X_i=[0,1]$. Since $\cap_{i=1}^k X_i=[0,1]\not=\O$, then $h_1(x), \ldots, h_k(x)$ is a collection of $k$ solution redundant input functions.
In addition, by Proposition \ref{p2}, we know that $X=[0,1]$. Since $G(\calV, \calE)$ does not satisfy Condition 2, then there exists a node partition $L, R, C, F$, where $L, R$ are nonempty and $|F|\le f$, such that $|N_i^{-}\cap (R\cup C)|\le f$ for each $i\in L$ and $|N_j^{-}\cap (L\cup C)|\le f$ for each $j\in R$. Consider the execution, denoted by $e_1$, wherein all nodes in $F$ are faulty and all the remaining nodes are non-faulty. The initial states of all non-faulty nodes are assigned as follows: $x_i(0)=0$ for each $i\in L$, $x_i(0)=1$ for each $i\in R$, and $x_i(0)$ as an arbitrary value within $[0,1]$ for each $i\in C$. In iteration 1, each faulty node $p$ sends $F_p(0, g_p(\cdot))$ to nodes in $L$, sends $F_p(1, g_p(\cdot))$ to nodes in $R$ and sends $F_p(a, g_p(\cdot))$ to nodes in $C$, where $a$ is an arbitrary value within $[0,1]$.
Let $i\in L$ be an arbitrary node in $L$. We will show that there exists an execution $e_i$ that can not be distinguished from $e_1$ by node $i$. Thus node $i$ should behave in the same way in $e_1$ and $e_i$. Let $x_i(t)$ and $\bar{x}_i(t)$ be the local estimate of agent $i$ in $e_1$ and in $e_i$, respectively.

\noindent{{\bf Execution $e_i$:}} The input functions are $h_1(x), \ldots, h_k(x)$. All nodes in $N_i^{-}\cap (R\cup C)$ are faulty, and the other nodes are non-faulty with initial state 0, i.e., $\bar{x}_j(0)=0$ for all $j\notin  N_i^{-}\cap (R\cup C)$.\footnote{Execution $e_i$ is possible since $|N_i^{-}\cap (R\cup C)|\le f$. } Since $\bar{x}_i(0)=0\in [0,1]=X$, for all $i\in \calV-\calF$, then $\bar{x}_i(1)=0$ for all $i\in \calV-\calF$, where $\calF=N_i^{-}\cap (R\cup C)$ is the set of faulty nodes in execution $e_i$.

Since agent $i$ cannot distinguish execution $e_i$ from $e_1$, thus $x_i(1)=0$. As agent $i$ is an arbitrary agent in $L$, in execution $e_1$ it holds that $x_i(1)=0$ for all $i\in L$. Similarly, we can show that in execution $e_1$, $x_i(1)=1$ for all $i\in R$. Repeatedly applying the above argument, we can conclude that for any iteration $t$ in execution $e_1$, it follows that $x_i(t)=0$ for all $i\in L$ and $x_j(t)=1$ for all $j\in R$, contradicting the asymptotic consensus requirement of a correct algorithm for Problem \ref{goal}. Thus, Condition 2 is a necessary condition for Problem \ref{goal}. In addition, it was shown in \cite{vaidya2012IABC} that graph $G(\calV, \calE)$ satisfies Condition 2 if and only if a source component exists containing at least $f+1$ nodes in every reduced graph of $G(\calV, \calE)$.

Therefore, if Problem \ref{goal} is solvable, a source component containing at least $f+1$ nodes must exist in every reduced graph of $G(\calV, \calE)$. Next we show, by contradiction, that if Problem \ref{goal} is solvable and $k^{\prime}>f+1$, a source component must contain at least $k^{\prime}$ nodes.\\

Suppose there exists a reduced graph
$\calH$ of $G(\calV, \calE)$ whose source component contains at most $k^{\prime}-1$ agents, and there exists a correct algorithm $\calA$ that can solve Problem \ref{goal}. Denote the source component of the reduced graph $\calH$ by $S_{\calH}$. Let $L, R, C, F$ be the node partition of $G(\calV, \calE)$ where $L=S_{\calH}$, $C=\O$, $R=\calV-L-\calF$ and $F=\calF$. Note that it is possible that $R=\O$. 
Let $h_1(x),\ldots, h_k(x)$ and $\widetilde{h}_1(x),\ldots, \widetilde{h}_k(x)$ be two collections of $k$ admissible input functions such that (1) $h_i(\cdot)=\widetilde{h}_i(\cdot)$ for $i=1, \ldots, k-1$, $\argmin~\widetilde{h}_i(x)=[0,1]$ for all $i=1,\ldots, k$, and $\argmin~ h_{k}(x)=\{1\}$. Let ${\bf A}$ be an assignment matrix such that $sp({\bf A})=k^{\prime}$ and ${\bf A}_{ki}=0$ for each $i\in L$. Such a matrix exists, since $|L|\le k^{\prime}-1$. Informally speaking, with this assignment matrix, each agent $i$ in $L$ does not have any information about the $k$--th input function.
Consider the execution $E_1$, wherein the input functions are $h_1(x),\ldots, h_k(x)$, each agent $p$ in $F$ is faulty and all other agents are non-faulty. The initial states of non-faulty agents in execution $E_1$ are assigned as follows: $x_i(0)=0$ for all $i\in L$, and $x_i(0)=1$ for all $i\in R$ (if $R\not=\O$)--recalling that $C=\O$.  Each faulty agent $p\in F$ sends $F_p\pth{0, \sum_{i=1}^k {\bf A}_{ip}\widetilde{h}_i(x)}$ to nodes in $L$, and sends $F_p\pth{1, \sum_{i=1}^k {\bf A}_{ip}h_i(x)}$ to nodes in $R$ (if $R\not=\O$).

Let $i\in L$ be an arbitrary agent in $L$. Now consider the following execution, denoted by $E_i$, wherein the local estimate of each non-faulty node $j$ is denoted as $\bar{x}_j$. The input functions are $\widetilde{h}_1(x), \ldots, \widetilde{h}_k(x)$. All nodes in $N_i^{-}\cap (R\cup C)$ are faulty, and the other nodes are non-faulty with initial state 0, i.e., $\bar{x}_j(0)=0$ for all $j\notin  N_i^{-}\cap (R\cup C)$.
 Since $\bar{x}_i(0)=0\in [0,1]=X$, for all $i\in \calV-\calF$, then $\bar{x}_i(t)=0$ for all $i\in \calV-\calF$ and for all $t$, where $\calF=N_i^{-}\cap (R\cup C)$ is the set of faulty nodes in execution $E_i$.

 Node $i$ cannot distinguish $E_i$ from $E_1$, thus $x_i(1)=0$ in $E_1$. Since $i$ is an arbitrary node in $L$, thus $x_i(1)=0$ for all $i\in L$ in $E_1$. Repeatedly applying the above argument, we have $\lim_{t\diverge}x_i(t)=0$ for each $i\in L$ in $E_1$. However, we know that in $E_1$, the optimal set is $X=\{1\}$. Because in execution $E_1$, the correct output must be $1$, $\calA$ is not a correct algorithm. Thus, we know if Problem \ref{goal} is solvable and $k^{\prime}>f+1$, a source component must contain at least $k^{\prime}$ nodes.\\

Therefore, we conclude that if Problem \ref{goal} is solvable, a source component containing at least $\max\{f+1, k^{\prime}\}$ nodes must exist in every reduced graph of $G(\calV, \calE)$, proving Theorem \ref{scsize}.

\raggedleft $\square$
\end{proof}

\subsection*{Proof of Corollary \ref{gsize}}
\begin{proof}
Let $sp({\bf A})=k^{\prime}$. It was shown in \cite{vaidya2012iterative} that if Condition 2 is true, then $n\ge 3f+1$. It is enough to consider the case when $k^{\prime}>f+1$.

Suppose $3f+1\le n<k^{\prime}+2f$. Consider the node partition $L, R, F$ such that $|R|=|F|=f$, and $L=\calV-R-F$. Since $3f+1\le n<k^{\prime}+2f$, it holds that $f+1\le |L|\le k^{\prime}-1$. Suppose all nodes in $F$ are faulty. Consider the subgraph $\calH$ constructed from $G(\calV, \calE)$ by (1) removing all faulty nodes, i.e., all nodes in $F$, and (2) for each $i\in L$, removing all incoming links from $R$. The subgraph $\calH$ is a valid reduced graph since $|R|=f$. By Theorem \ref{scsize}, a source component exists in $\calH$. Let $S$ be the source component of $\calH$. By Theorem \ref{scsize}, it holds that $|S|\ge k^{\prime}$. Since each node $j\in R$ cannot reach nodes in $L$, by definition, each node $j\in R$ is not contained in a source component. Thus, $S\subseteq L$. Consequently, it holds that $|S|\le |L|\le k^{\prime}-1$, contradicting the fact that $|S|\ge k^{\prime}$. Thus, when $k^{\prime}>f+1$, it holds that $n\ge k^{\prime}+2f$.\\

Therefore, we conclude that $n\ge \max\{3f+1, k^{\prime}+2f\}$.

\raggedleft $\square$
\end{proof}

\section{Matrix Representation of Algorithm 2}
\label{app:matrix update}

If $G(\calV,\calE)$ satisfies Condition 1, it was shown in Proposition \ref{claim_1} that the updates of ${\bf x}\in \reals^{n-\phi}$ in each iteration can be written compactly in a matrix form.
This observation is made in \cite{Vaidya2012MatrixConsensus}, and we restate this result below for completeness.

\begin{proposition}\cite{Vaidya2012MatrixConsensus}
\label{claim_1}
{
We can express the iterative update of the state
of a non-faulty node $i$ $(1\leq i\leq n-\phi)$
performed in (\ref{e_Z}) using the matrix form in (\ref{e_matrix_i})
below,
where ${\bf M}_i(t)$ satisfies the following four conditions.
\begin{eqnarray}
x_i(t+1) & = & {\bf M}_i(t) ~ {\bf x}(t)-\alpha(t) d_i(t).
\label{e_matrix_i}
\end{eqnarray}
}
In addition to $t$, the row vector ${\bf M}_i(t)$
may depend on the state vector ${\bf x}(t-1)$ as well as the
behavior of the faulty
nodes in $\calF$. For simplicity, the notation ${\bf M}_i(t)$ does not
explicitly represent this dependence.
\begin{enumerate}
\item ${\bf M}_i(t)$ is a {\em stochastic} row vector of size $(n-\phi)$.
Thus,
${\bf M}_{ij}(t)\geq 0$, for $1\leq j\leq n-\phi$, and
\[
\sum_{1\leq j\leq n-\phi}~{\bf M}_{ij}(t) ~ = ~ 1
\]

\item ${\bf M}_{ii}(t)$ equals $a_i$ defined in Algorithm 1.
Recall that $a_i\geq \alpha$.

\item ${\bf M}_{ij}(t)$ is non-zero
{\bf only if}  $(j,i)\in\calE$ or $j=i$.
\item At least $|N_i^-\cap\,(\calV-\calF)| - f+1$ elements in ${\bf M}_i(t)$
are lower bounded by some constant $\beta>0$
($\beta$ is independent of $i$).
Note that $N_i^-\cap\,(\calV-\calF)$ is the set of non-faulty
incoming neighbors of node $i$.
\end{enumerate}
\end{proposition}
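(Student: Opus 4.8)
The plan is to rewrite the trimmed-average update~(\ref{e_Z}) (for iteration $t+1$) as a genuine row-stochastic combination of the non-faulty agents' states, absorbing every faulty neighbor's contribution into contributions of non-faulty agents; ${\bf M}_i(t)$ is then the row of coefficients this rewriting produces. Fix a non-faulty agent $i$ and split $N_i^*(t)$ into $\calS=N_i^*(t)\cap(\calV-\calF)$ and $\calB=N_i^*(t)\cap\calF$. For $j\in\{i\}\cup\calS$ the value used is the true state $x_j(t-1)$, so those terms need no change. The key point handles $\calB$: because $i$ discards the $f$ smallest and $f$ largest of its received values and $|N_i^-\cap\calF|\le f$, whenever $\calB\ne\emptyset$ at least one non-faulty neighbor $p$ is discarded from below and at least one non-faulty neighbor $q$ from above, and then $x_p(t-1)=w_p\le w_k\le w_q=x_q(t-1)$ for every kept value $w_k$, in particular for $k\in\calB$. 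Hence each faulty term $a_i w_k$, $k\in\calB$, can be replaced by $a_i\big(\theta_k x_p(t-1)+(1-\theta_k)x_q(t-1)\big)$ with $\theta_k\in[0,1]$, which preserves the total weight $a_i$ and moves it onto agents of $N_i^-\setminus\{i\}$. Collecting coefficients yields ${\bf M}_i(t)\ge\zeros$, supported on non-faulty agents, with $x_i(t+1)={\bf M}_i(t){\bf x}(t)-\alpha(t)d_i(t)$; the weights still sum to $1$, which is condition~(1). Condition~(3) is then immediate, since weight lands only on $i$ and on members of $N_i^-$ (the set $\calS$ and the bracketing agents $p,q$). Condition~(2) holds because no substitution touches the $x_i$-term, so ${\bf M}_{ii}(t)=a_i$ exactly, and $a_i=1/(|N_i^*(t)|+1)\ge 1/(n-2f)$ is bounded below by a graph-dependent constant.

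The delicate part is condition~(4): at least $|N_i^-\cap(\calV-\calF)|-f+1$ entries of ${\bf M}_i(t)$ must be bounded below by a single $\beta>0$. The entries for $i$ and the $|\calS|$ surviving non-faulty neighbors are each $\ge a_i$, but their number can be as low as $|N_i^-\cap(\calV-\calF)|-2f+1$, a deficit of $f-b^{\mathrm{tr}}$, where $b^{\mathrm{tr}}\le f$ counts the faulty received values that got discarded. To cover it, note that each kept value $w_j$ ($j\in\calS$) lies in $[w_p,w_q]$ for every non-faulty $p$ discarded from below and non-faulty $q$ discarded from above, and that there are at least $f-b^{\mathrm{tr}}$ disjoint such bracketing pairs. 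Fixing one $j_0\in\calS$, I write $w_{j_0}$ as a convex combination over each of these pairs and split $j_0$'s weight $a_i$ equally among them; each pair then routes weight at least $a_i/(2(f+1))$ to one of its two (distinct, not yet counted) non-faulty endpoints, while $j_0$ retains at least $a_i/(f+1)$. With $\beta=\frac{1}{2(f+1)(n-2f)}$, the number of entries that are $\ge\beta$ becomes $1+|\calS|+(f-b^{\mathrm{tr}})=|N_i^-\cap(\calV-\calF)|-f+1$.

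I expect the bookkeeping for condition~(4) to be the main obstacle. One must carry the above through every configuration of how many faulty neighbors survive the trim versus are discarded, and in particular through the near-degenerate cases where $|N_i^-|$ is close to $2f$, so that $\calS$ is empty or tiny and the weight for the extra entries has to be found elsewhere (for instance by redistributing surviving faulty contributions across several non-faulty neighbors); in each case one checks both that enough disjoint bracketing pairs are available and that every assigned coefficient stays above a uniform $\beta$ while ${\bf M}_{ii}(t)=a_i$ is left intact. Each individual case is elementary, using only $|N_i^-\cap\calF|\le f$ and the removal of $2f$ values, but the case analysis is the real content; the full construction is the one in \cite{Vaidya2012MatrixConsensus}, which I would follow.
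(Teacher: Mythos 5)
The paper does not actually prove this proposition: it is imported verbatim from \cite{Vaidya2012MatrixConsensus} and merely restated in Appendix~\ref{app:matrix update} ``for completeness,'' so there is no internal proof to compare against. Your reconstruction follows the same sandwiching/weight-redistribution construction as that reference --- absorbing each surviving faulty value into a convex combination of a non-faulty value discarded from below and one discarded from above --- and your diagnosis that condition~(4) is where the real content lies (the naive count of entries equal to $a_i$ falls short of $|N_i^-\cap(\calV-\calF)|-f+1$ by up to $f$, so weight must additionally be routed from kept values onto discarded non-faulty in-neighbors via disjoint bracketing pairs, leaving exactly $f$ non-faulty in-neighbors uncovered, consistent with the reduced-graph bound used in (\ref{reducedgraph})) is the correct reading of that construction.
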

%
%
%

\section{Convergence of the Transition Matrices ${\bf \Phi}(t,r)$ }
\label{app:ConvergenceProduct}

%
\subsection*{Proof of Lemma \ref{lb}}
\begin{proof}
Recall that $R_\calF$ is the collection of all reduced graphs of the given graph $G(\calV, \calE)$. Let $\calH\in R_\calF$ be an arbitrary reduced graph with adjacency matrix $\bf H$. Let $k^{\prime}=sp({\bf A})$. From Theorem \ref{scsize} we know that there are at least $\max \{k^{\prime}, f+1\}$ nodes in the unique source component in $\calH$. Denote the source component in $\calH$ by $S_{\calH}$ and let $j_1, j_2, \ldots, j_p$, where
\begin{align*}
p\triangleq|S_{\calH}|\ge \max \{k^{\prime}, f+1\},
\end{align*}
be the $p$ nodes in $S_{\calH}$.  By definition, each $j_i$ has a directed path to all the other non-faulty nodes in $\calH$. Since the length of a path from $j_i$ to any other node in $\calH$ is at most $n-\phi-1$, then the $j_i$--th column of ${\bf H}^{n-\phi}$ will be non-zero for $i=1, 2, \ldots, p$. Since $p\ge \max \{k^{\prime}, f+1\}$, there are at least $\max \{k^{\prime}, f+1\}$ such columns in ${\bf H}^{n-\phi}$.

Recall that for any $t\ge 1$, there exists a graph $\calH(t)\in R_\calF$ such that $\beta {\bf H}(t)\le {\bf M}(t)$, thus we have
\begin{align*}
{\bf \Phi}(r+\nu-1,r)&={\bf M}(r+\nu-1){\bf M}(r+\nu-2)\ldots {\bf M}(r)\\
&\ge \beta^{\nu} \prod_{t=r}^{r+\nu-1} {\bf H}(t).
\end{align*}
The above product of adjacency matrices consists of $\nu=\tau(n-\phi)$ matrices (corresponding to reduced graphs) in $R_\calF$. Thus, at least one of the $\tau$ distinct adjacency matrices in $R_\calF$, say $\calH^{\prime}$, will appear in the above product at least $n-\phi$ times.  Let $S_{\calH^{\prime}}$ and ${\bf B}$ be the source component size and the adjacency matrix, respectively, of $\calH^{\prime}$. In addition, let $p^{\prime}\triangleq |\calH^{\prime}|$.
Due to the existence of self-loops in the update dynamic, each ${\bf H}(t)$ has positive diagonal. In addition, $\pth{\bf B}^{n-\phi}$ contains $p^{\prime}$ nonzero columns, where $p^{\prime}\ge \max \{k^{\prime}, f+1\}$. Thus each of the $j_1, j_2, \ldots, j_{p^{\prime}}$ columns in $\prod_{t=r}^{r+\nu-1} {\bf H}(t)$ is lowered by $\ones\in \reals^{n-\phi}$ component-wise, i.e., $\pth{\prod_{t=r}^{r+\nu-1} {\bf H}(t)}_{\cdot j_i}\ge \ones$ for $i=1, \ldots, p^{\prime}$, where $\pth{\prod_{t=r}^{r+\nu-1} {\bf H}(t)}_{\cdot j_i}$ is the $j_i$--th column of $\prod_{t=r}^{r+\nu-1} {\bf H}(t)$. Therefore,
\begin{align*}
\pth{{\bf \Phi}(r+\nu-1,r)}_{\cdot j_i}\ge \beta^{\nu}\ones,
\end{align*}
for $i=1, 2, \ldots, p^{\prime}$, where $p^{\prime}\ge \max \{k^{\prime}, f+1\}$--noting that $\pth{{\bf \Phi}(r+\nu-1,r)}_{\cdot j_i}$ is the $j_i$--th column of ${\bf \Phi}(r+\nu-1,r)$.

\raggedleft $\square$
\end{proof}

\subsection*{Proof of Lemma \ref{lblimiting}}
\begin{proof}
From Lemma \ref{lb}, we know that there are at least $\max \{sp({\bf A}), f+1\}$ columns of ${\bf \Phi}(r+\nu-1,r)$ that are lower bounded by $\beta^{\nu}$ for a given $r$. Let $\calI_r$ be the collection of column indices such that for each $i\in \calI_r$,
\[
\pth{{\bf \Phi}(r+\nu-1,r)}_{\cdot i}\ge \beta^{\nu}\ones.
\]
Let $t>r+\nu$. From (\ref{mixing}), we know that

\begin{align*}
\lim_{t\ge r,t\diverge}{\bf \Phi}(t, r)=\ones \pi(r)^{\prime},
\end{align*}
for all $r$. By the definition of ${\bf \Phi}(t, r)$, we know for $t\ge r+\nu-1$, we have
\begin{align*}
{\bf \Phi}(t, r)={\bf \Phi}\pth{t, r+\nu}{\bf \Phi}\pth{r+\nu-1, r}.
\end{align*}
Thus,
\begin{align*}
\ones \pi(r)^{\prime}&=\lim_{t\ge r,~t\diverge}{\bf \Phi}(t, r)\\
&=\lim_{t\ge r,~t\diverge}{\bf \Phi}\pth{t, r+\nu}{\bf \Phi}\pth{r+\nu-1, r}\\
&=\pth{\ones \pi(r+\nu)^{\prime}}{\bf \Phi}\pth{r+\nu-1, r}.
\end{align*}
Thus, for each $i\in \calI_r$,
\begin{align*}
\pi_i(r)&=\sum_{j=1}^{n-\phi}\pi_{j}\pth{r+\nu}{\bf \Phi}_{ji}\pth{r+\nu-1, r}\\
&\ge \pth{\sum_{j=1}^{n-\phi}\pi_{j}\pth{r+\nu}}\beta^{\nu}\\
&\ge \beta^{\nu}.
\end{align*}

\raggedleft $\square$
\end{proof}

\section{Convergence Analysis of Algorithm 1}
\label{app:ConvergenceAlgorithm}

\subsection*{Proof of Lemma \ref{BasicIter}}
The proof of Lemma \ref{BasicIter} can be found in \cite{Nedic2009}. We present the proof below for completeness.
\begin{proof}
For any $x\in \reals$ and any $t\ge 0$,
\begin{align*}
| y(t+1)-x |^2&=|y(t)-\alpha(t) \iprod{\pi(t+1)}{{\bf d}(t)}-x |^2\\
&=| y(t)-x |^2-2\alpha(t)  \iprod{ \pi(t+1)}{ {\bf d}(t)} (y(t)-x)+\alpha^2(t)|\iprod{\pi(t+1)}{{\bf d}(t)}|^2\\
&\overset{(a)}{\le}  |y(t)-x |^2-2\alpha(t)  \iprod{ \pi(t+1)}{{\bf d}(t)} (y(t)-x)+\alpha^2(t)\norm{ \pi(t+1)}^2  \norm{{\bf d}(t)}^2\\
& \overset{(b)}{\le}|y(t)-x |^2-2\alpha(t)  \iprod{ \pi(t+1)}{{\bf d}(t)} (y(t)-x)+\alpha^2(t)  \norm{{\bf d}(t)}^2\\
&=|y(t)-x |^2-2\alpha(t) \sum_{j=1}^{n-\phi} \pi_j(t+1) d_j(t)\pth{y(t)-x}+\alpha^2(t)  \sum_{j=1}^{n-\phi}d_j^2(t).
\end{align*}
Inequality $(a)$ follows from Cauchy-Schwarz inequality. Inequality $(b)$ follows because
$$\norm{\pi(t+1)}^2=\sum_{j=1}^{n-\phi}{\pi}_j^2(t+1)\le \sum_{j=1}^{n-\phi}{\pi}_j (t+1)=1.$$
We now consider the term $d_j(t)\pth{y(t)-x}$ for any $j\in \calV-\calF$, for which we have
\begin{align}
d_j(t)\pth{y(t)-x}&=d_j(t)\pth{y(t)-x_j(t)+x_j(t)-x}\nonumber\\
&=d_j(t)\pth{y(t)-x_j(t)}+d_j(t)\pth{x_j(t)-x}\nonumber\\
&\ge -| d_j(t)|\,| y(t)-x_j(t)|+d_j(t)\pth{x_j(t)-x}\nonumber\\
&\ge -|d_j(t)|\,|y(t)-x_j(t)|+g_j\pth{x_j(t)}-g_j(x),
\label{step1}
\end{align}
since $d_j(t)$ is a gradient of $g_j(\cdot)$ at $x_j(t)$. Furthermore, by using a gradient $\delta_j(t)$ of $g_j(\cdot)$ at $y(t)$, we also have for any $j\in \calV-\calF$ and $x\in \reals$,
\begin{align}
g_j\pth{x_j(t)}-g_j(x)&=g_j\pth{x_j(t)}-g_j\pth{y(t)}+g_j\pth{y(t)}-g_j(x)\nonumber\\
&\ge \delta_j(t)\pth{x_j(t)-y(t)}+g_j\pth{y(t)}-g_j(x)\nonumber\\
&\ge -|\delta_j(t)|\, |x_j(t)-y(t)| +g_j\pth{y(t)}-g_j(x).
\label{step2}
\end{align}
Combining (\ref{step1}) and (\ref{step2}) together, it follows that for any $j\in \calV-\calF$ and any $x\in \reals$, we obtain
\begin{align*}
d_j(t)\pth{y(t)-x}\ge -(| d_j(t)|+|\delta_j(t)| ) |y(t)-x_j(t)|+g_j(y(t))-g_j(x).
\end{align*}
Therefore,
\begin{align*}
|y(t+1)-x|^2
&\le | y(t)-x |^2-2\alpha(t) \sum_{j=1}^{n-\phi} \pi_j(t+1) d_j(t)\pth{y(t)-x}+\alpha^2(t)  \sum_{j=1}^{n-\phi}d_j^2(t)\\
&\le | y(t)-x |^2\\
&\quad+2\alpha(t) \sum_{j=1}^{n-\phi} \pi_j(t+1) \pth{\pth{|d_j(t)|+|\delta_j(t)|} |y(t)-x_j(t)|-\pth{g_j\pth{y(t)}-g_j(x)}}\\
&\quad+\alpha^2(t)  \sum_{j=1}^{n-\phi}d_j^2(t)\\
&\le |y(t)-x|^2+2\alpha(t) \sum_{j=1}^{n-\phi} \pi_j(t+1) \pth{\pth{|d_j(t)|+|\delta_j(t)|} |y(t)-x_j(t)|}\\
&\quad-2\alpha(t) \sum_{j=1}^{n-\phi} \pi_j(t+1)\pth{g_j\pth{y(t)}-g_j(x)}+\alpha^2(t)  \sum_{j=1}^{n-\phi}d_j^2(t)\\
&\le |  y(t)-x|^2+4L\alpha(t) \sum_{j=1}^{n-\phi} \pi_j(t+1)|y(t)-x_j(t)|\\
&\quad-2\alpha(t) \sum_{j=1}^{n-\phi} \pi_j(t+1)\pth{g_j\pth{y(t)}-g_j(x)}+\alpha^2(t)(n-\phi)L^2.
\end{align*}
The last inequality holds from the fact that $g_j(\cdot)$ is $L$-Lipschitz continuous for each $j\in \calV$.

\raggedleft $\square$
\end{proof}

\subsection*{Proof of Lemma \ref{uub}}

\begin{proof}
Recall (\ref{updates}). For $t>0$,
\begin{align*}
{\bf x}(t)=
{\bf \Phi} (t-1, 0){\bf x}(0)-\sum_{r=1}^{t}\alpha(r-1){\bf \Phi} (t-1, r){\bf d}(r-1)
\end{align*}
then each $x_i(t)$ can be written as
\begin{align*}
x_i(t)=\sum_{j=1}^{n-\phi}{\bf \Phi}_{ij} (t-1, 0)x_{j}(0)-\sum_{r=1}^{t}\pth{\alpha(r-1)\sum_{j=1}^{n-\phi}{\bf \Phi}_{ij} (t-1, r)d_{j}(r-1)};
\end{align*}
and (\ref{yupdate}) implies that $y(t)=\sum_{j=1}^{n-\phi}\pi_{j} (0)x_{j}(0)-\sum_{r=1}^{t}\alpha(r-1) \sum_{j=1}^{n-\phi}\pi_j(r)d_j(r-1)$. Thus

\begin{align}
\label{uniformbd}
\nonumber
&|y(t)-x_i(t)|\\
&\le \left |\sum_{j=1}^{n-\phi}\pth{\pi_{j} (0)-{\bf \Phi}_{ij} (t-1, 0)}x_{j}(0)\right |+\left |\sum_{r=1}^{t}\pth{\alpha(r-1)\sum_{j=1}^{n-\phi}\pth{{\bf \Phi}_{ij} (t-1, r)-\pi_{j}(r)}d_{j}(r-1)}\right |.
\end{align}
We bound the two terms in (\ref{uniformbd}) separately. 
%
%
%
For the first term in (\ref{uniformbd}), we have
\begin{align}
\label{RHS1}
\nonumber
\left |\sum_{j=1}^{n-\phi}\pth{\pi_{j} (0)-{\bf \Phi}_{ij} (t-1, 0)}x_{j}(0)\right |&\le \sum_{j=1}^{n-\phi}\left |\pi_{j} (0)-{\bf \Phi}_{ij} (t-1, 0)\right |\, |x_{j}(0)|\\
\nonumber
&\overset{(a)}{\le} \sum_{j=1}^{n-\phi}\gamma^{\lceil \frac{t}{\nu}\rceil}\max \{|u|, |U|\}\\
&=\pth{n-\phi}\max \{|u|, |U|\}\gamma^{\lceil \frac{t}{\nu}\rceil},
\end{align}
where inequality (a) follows from Theorem \ref{convergencerate}.

In addition, the second term in (\ref{uniformbd}) can be bounded as follows.
\begin{align}
\nonumber
&\left |\sum_{r=1}^{t}\pth{\alpha(r-1)\sum_{j=1}^{n-\phi}\pth{{\bf \Phi}_{ij} (t-1, r)-\pi_{j}(r)}d_{j}(r-1)}\right |\\
\nonumber
&\quad  \overset{(a)}{\le}  \sum_{r=1}^{t-1}\pth{\alpha(r-1)\sum_{j=1}^{n-\phi}\left |{\bf \Phi}_{ij} (t-1, r)-\pi_{j}(r)\right |\,  \left |d_{j}(r-1)\right |}+\alpha(t-1) \left |d_i(t-1)-\sum_{j=1}^{n-\phi}\pi_j(t)d_j(t-1)\right |\\
\nonumber
&\quad  \le \sum_{r=1}^{t-1}\pth{\alpha(r-1)\sum_{j=1}^{n-\phi}\left |{\bf \Phi}_{ij} (t-1, r)-\pi_{j}(r)\right |\, \left |d_{j}(r-1)\right |}+\alpha(t-1)\sum_{j=1}^{n-\phi}\pi_j(t)\left |d_i(t-1)-d_j(t-1)\right |\\
\nonumber
&\quad  \le \sum_{r=1}^{t-1}\pth{\alpha(r-1)\sum_{j=1}^{n-\phi}\left |{\bf \Phi}_{ij} (t-1, r)-\pi_{j}(r)\right |}L+2\alpha(t-1) L\\
&\quad \le \pth{n-\phi}L\sum_{r=1}^{t-1}\alpha(r-1) \gamma^{\lceil \frac{t-r}{\nu}\rceil}+2\alpha(t-1) L
\label{RHS2}
\end{align}
where inequality $(a)$ follows from the fact that ${\bf \Phi}(t-1,t)={\bf I}$. Note that when $t=1$, it holds that
$$\sum_{r=1}^{t-1}\pth{\alpha(r-1)\sum_{j=1}^{n-\phi}|{\bf \Phi}_{ij} (t-1, r)-\pi_{j}(r)|\, |d_{j}(r-1)|}=0.$$
From (\ref{RHS1}) and (\ref{RHS2}), the LHS of (\ref{uniformbd}) can be upper bounded by
\begin{align*}
|y(t)-x_i(t)|\le \pth{n-\phi}\max \{|u|, |U|\}\gamma^{\lceil \frac{t}{\nu}\rceil} +\pth{n-\phi}L\sum_{r=1}^{t-1}\alpha(r-1) \gamma^{\lceil \frac{t-r}{\nu}\rceil}+2\alpha(t-1) L.
\end{align*}
The proof is complete.

\raggedleft $\square$
\end{proof}

\subsection*{Proof of Lemma \ref{consensus}}
\begin{proof}
Recall (\ref{consensus ub}),
\begin{align*}
|y(t)-x_i(t)|\le \pth{n-\phi}\max \{|u|, |U|\}\gamma^{\lceil \frac{t}{\nu}\rceil} +\pth{n-\phi}L\sum_{r=1}^{t-1}\alpha(r-1) \gamma^{\lceil \frac{t-r}{\nu}\rceil}+2\alpha(t-1) L.
\end{align*}
Since $0<\gamma\le 1$ and $\lim_{t\diverge} \alpha(t)=0$, it is easy to see that the first term and the third term on the RHS of (\ref{consensus ub}) both converge. In particular,
$$ \lim_{t\diverge} \pth{n-\phi}\max \{|u|, |U|\}\gamma^{\lceil \frac{t}{\nu}\rceil}=0,$$
and
$$\lim_{t\diverge}2\alpha(t-1)L=0.$$

Define $$\ell(t)=\sum_{r=1}^{t}\alpha(r-1) \gamma^{\lceil \frac{t+1-r}{\nu}\rceil}.$$

For any $t\ge 1$, we have
\begin{align*}
\ell(t)&=\sum_{r=1}^{t}\alpha(r-1) \gamma^{\lceil \frac{t+1-r}{\nu}\rceil}\\
&=\sum_{r=1}^{\lceil \frac{t}{2}\rceil}\alpha(r-1) \gamma^{\lceil \frac{t+1-r}{\nu}\rceil}+\sum_{r=\lfloor \frac{t}{2}\rfloor+1}^{t}\alpha(r-1) \gamma^{\lceil \frac{t+1-r}{\nu}\rceil}\\
&\le \sum_{r=1}^{\lceil \frac{t}{2}\rceil}\alpha(r-1) \gamma^{\frac{t+1-r}{\nu}}+\sum_{r=\lceil \frac{t}{2}\rceil+1}^{t}\alpha(r-1) \gamma^{\frac{t+1-r}{\nu}}\\
&\le \sum_{r=1}^{\lceil \frac{t}{2}\rceil}\alpha(0) \gamma^{\frac{t+1-r}{\nu}}+\alpha(\lceil \frac{t}{2}\rceil)\sum_{r=\lceil \frac{t}{2}\rceil+1}^{t}\gamma^{\frac{t+1-r}{\nu}}\\
&\le \alpha(0) \frac{\gamma^{\frac{t}{2\nu}}}{1-\gamma^{\frac{1}{\nu}}} +\frac{\alpha(\lceil \frac{t}{2}\rceil)}{1-\gamma^{\frac{1}{\nu}}}.
\end{align*}

Thus, we get
$$\limsup_{t\diverge} \ell(t)\le \alpha(0)\lim_{t\diverge}\frac{\gamma^{\frac{t}{2\nu}}}{1-\gamma^{\frac{1}{\nu}}} +\lim_{t\diverge}\frac{\alpha(\lceil \frac{t}{2}\rceil)}{1-\gamma^{\frac{1}{\nu}}}=0+0=0.
 $$
Taking limit sup on both sides of (\ref{consensus ub}), we have
\begin{align*}
\limsup_{t\diverge} |y(t)-x_i(t)|&\le \lim_{t\diverge}\pth{n-\phi}\max \{|u|, |U|\}\gamma^{\lceil \frac{t}{\nu}\rceil} +\limsup_{t\diverge}\pth{n-\phi}L\ell(t-1)+\lim_{t\diverge}2\alpha(t-1) L\\
&\le 0+0+0=0.
\end{align*}
On the other hand, since $|y(t)-x_i(t)|\ge 0$ for each $t$, it holds that
$$\liminf_{t\diverge} |y(t)-x_i(t)|\ge 0.$$
Thus,
$$\limsup_{t\diverge} |y(t)-x_i(t)|\le 0\le \liminf_{t\diverge} |y(t)-x_i(t)|.$$
By definition, we know $\liminf_{t\diverge} |y(t)-x_i(t)| \le \limsup_{t\diverge} |y(t)-x_i(t)|$. \\

Therefore, the limit of $|y(t)-x_i(t)|$ exists, and
$\lim_{t\diverge} |y(t)-x_i(t)|=0.$

\raggedleft $\square$
\end{proof}

\subsection*{Proof of Lemma \ref{c1}}

\begin{proof}
Since $\sum_{j=1}^{n-\phi} \pi_i(t+1)=1$, by Lemma \ref{uub}, we have for all $i\in \calV-\calF$,
\begin{align*}
&\sum_{j=1}^{n-\phi} \pi_i(t+1)|y(t)-x_j(t)|\\
&\le  \sum_{j=1}^{n-\phi} \pi_i(t+1)\pth{\pth{n-\phi}\max \{|u|, |U|\}\gamma^{\lceil \frac{t}{\nu}\rceil} +\pth{n-\phi}L\sum_{r=1}^{t-1}\alpha(r-1) \gamma^{\lceil \frac{t-r}{\nu}\rceil}+2\alpha(t-1) L}\\
&\le \pth{n-\phi}\max \{|u|, |U|\}\gamma^{\lceil \frac{t}{\nu}\rceil} +\pth{n-\phi}L\sum_{r=1}^{t-1}\alpha(r-1) \gamma^{\lceil \frac{t-r}{\nu}\rceil}+2\alpha(t-1) L.
\end{align*}
Using the inequality that for each $r$ and $t$
 $$\alpha(t)\alpha(r-1)\le \frac{1}{2}\pth{\alpha^2(t)+\alpha^2(r-1)},$$
we obtain
\begin{align}
\nonumber
&\sum_{t=2}^{\infty}\alpha(t)\sum_{j=1}^{n-\phi} \pi_i(t+1)|y(t)-x_j(t)|\\
\nonumber
&\le \sum_{t=2}^{\infty}\pth{\alpha(t)\pth{n-\phi}\max \{|u|, |U|\}\gamma^{\lceil \frac{t}{\nu}\rceil} +\pth{n-\phi}L\sum_{r=1}^{t-1}\alpha(t)\alpha(r-1) \gamma^{\lceil \frac{t-r}{\nu}\rceil}+2\alpha(t)\alpha(t-1) L}\\
\nonumber
&\le \sum_{t=2}^{\infty}\alpha(t)\pth{n-\phi}\max \{|u|, |U|\}\gamma^{\lceil \frac{t}{\nu}\rceil} +\frac{(n-\phi)L}{2}\sum_{t=2}^{\infty}\sum_{r=1}^{t-1}\alpha^2(t)\gamma^{\lceil \frac{t-r}{\nu}\rceil}\\
&\quad +\frac{(n-\phi)L}{2}\sum_{t=2}^{\infty}\sum_{r=1}^{t-1}\alpha^2(r-1)\gamma^{\lceil \frac{t-r}{\nu}\rceil}+\sum_{t=2}^{\infty}\pth{\alpha^2(t)+\alpha^2(t-1)}L.
\label{summable}
\end{align}
To show $\sum_{t=2}^{\infty}\alpha(t)\sum_{j=1}^{n-\phi} \pi_i(t+1)|y(t)-x_j(t)|<\infty$, we show each of the four terms in the RHS of (\ref{summable}) is finite.\\

For the first term on the RHS of (\ref{summable}), we have
\begin{align}
\label{s term1}
\nonumber
\sum_{t=2}^{\infty}\alpha(t)\pth{n-\phi}\max \{|u|, |U|\}\gamma^{\lceil \frac{t}{\nu}\rceil}&=\pth{n-\phi}\max \{|u|, |U|\}\sum_{t=2}^{\infty}\alpha(t)\gamma^{\lceil \frac{t}{\nu}\rceil}\\
\nonumber
&\overset{(a)}{\le} \pth{n-\phi}\max \{|u|, |U|\}\alpha(1)\sum_{t=2}^{\infty}\gamma^{\lceil \frac{t}{\nu}\rceil}\\
\nonumber
&\le \pth{n-\phi}\max \{|u|, |U|\}\alpha(1)\sum_{t=2}^{\infty}\gamma^{\frac{t}{\nu}}\\
\nonumber
&\le \pth{n-\phi}\max \{|u|, |U|\} \frac{\alpha(1)}{1-\gamma^{\frac{1}{\nu}}}\\
&<\infty.
\end{align}
Inequality $(a)$ holds due to the fact that $\alpha(t)\le \alpha(1)$ for all $t\ge 1$. 

For the second term in the RHS of (\ref{summable}), we have
\begin{align}
\nonumber
\frac{(n-\phi)L}{2}\sum_{t=2}^{\infty}\sum_{r=1}^{t-1}\alpha^2(t)\gamma^{\lceil \frac{t-r}{\nu}\rceil}&\le\frac{(n-\phi)L}{2}\sum_{t=2}^{\infty}\sum_{r=1}^{t-1}\alpha^2(t)\gamma^{\frac{t-r}{\nu}}\\
\nonumber
&\le\frac{(n-\phi)L}{2}\sum_{t=2}^{\infty}\alpha^2(t)\sum_{r=1}^{t-1}\gamma^{\frac{t-r}{\nu}}\\
\nonumber
&\le\frac{(n-\phi)L}{2}\sum_{t=2}^{\infty}\alpha^2(t)\sum_{r=1}^{\infty} \gamma^{\frac{r}{\nu}}\\
\nonumber
&\le\frac{(n-\phi)L}{2(1-\gamma^{\frac{1}{\nu}})}\sum_{t=2}^{\infty}\alpha^2(t)~~~~~~\text{since}~~\sum_{r=1}^{\infty} \gamma^{\frac{r}{\nu}}=\frac{\gamma^{\frac{1}{\nu}}}{1-\gamma^{\frac{1}{\nu}}}\le \frac{1}{1-\gamma^{\frac{1}{\nu}}} \\
&<\infty~~~~~~~~~~~\text{due to the fact that}~\sum_{t=0}^{\infty}\alpha^2(t)<\infty
\label{s term2}
\end{align}

For the forth term in the RHS of (\ref{summable}), we get
\begin{align}
\label{s term3}
\sum_{t=2}^{\infty}\pth{\alpha^2(t)+\alpha^2(t-1)}L=
L\sum_{t=2}^{\infty}\alpha^2(t)+L\sum_{t=2}^{\infty}\alpha^2(t-1)<\infty.
\end{align}

For the third term in the RHS of (\ref{summable}), for any fixed $N$, we get
\begin{align}
\nonumber
\frac{(n-\phi)L}{2}\sum_{t=2}^{N}\sum_{r=1}^{t-1}\alpha^2(r-1)\gamma^{\lceil \frac{t-r}{\nu}\rceil}&\le \frac{(n-\phi)L}{2}\sum_{t=2}^{N}\sum_{r=1}^{t-1}\alpha^2(r-1)\gamma^{\frac{t-r}{\nu}}\\
\nonumber
&=\frac{(n-\phi)L}{2}\sum_{r=1}^{N-1}\alpha^2(r-1)\sum_{t=1}^{N-r}\gamma^{\frac{t}{\nu}}\\
\nonumber
&\le \frac{(n-\phi)L}{2(1-\gamma^{\frac{1}{\nu}})}\sum_{r=1}^{N-1}\alpha^2(r-1).
\end{align}
Thus, we get
\begin{align}
\label{s term4}
\frac{(n-\phi)L}{2}\sum_{t=2}^{\infty}\sum_{r=1}^{t-1}\alpha^2(r-1)\gamma^{\lceil \frac{t-r}{\nu}\rceil}\le \frac{(n-\phi)L}{2(1-\gamma^{\frac{1}{\nu}})}\sum_{r=1}^{\infty}\alpha^2(r-1)<\infty.
\end{align}
In addition, for $t=0$, it holds that $|y(0)-x_j(0)|\le U-u$. For $t=1$, by Lemma \ref{uub}, we have
\begin{align*}
\sum_{j=1}^{n-\phi} \pi_i(2)|y(1)-x_j(1)| \le (n-\phi)\max\{|u|, |U|\} \gamma^{\lceil \frac{1}{\nu}\rceil}+2\alpha(0)L.
\end{align*}
Thus,
\begin{align}
\nonumber
\alpha(0)\sum_{j=1}^{n-\phi} \pi_i(1)|y(0)-x_j(0)|+\alpha(1)\sum_{j=1}^{n-\phi} \pi_i(2)|y(1)-x_j(1)|&\le \alpha(0)\pth{U-u}+2\alpha(0)L\\
\nonumber
&\quad+\alpha(1)(n-\phi)\max\{|u|, |U|\} \gamma^{\lceil \frac{1}{\nu}\rceil}\\
&<\infty.
\label{t=01}
\end{align}
By (\ref{s term1}), (\ref{s term2}), (\ref{s term3}), (\ref{s term4}) and (\ref{t=01}), we conclude that
$$\sum_{t=0}^{\infty}\alpha(t)\sum_{j=1}^{n-\phi} \pi_i(t+1)|y(t)-x_j(t)|<\infty,$$
proving the lemma.

\raggedleft $\square$
\end{proof}

\section*{Proof of Theorem \ref{ConAlgorithm1}}

\begin{proof}
Recall that each $g_i(\cdot)$ is defined as
\begin{align*}
g_i(x)={\bf A}_{1i}h_1(x)+{\bf A}_{2i}h_2(x)+\ldots +{\bf A}_{ki}h_k(x),
\end{align*}
for $i\in \calV$, where ${\bf A}_{ji}\ge 0$ and $\sum_{j=1}^k{\bf A}_{ji}=1$. Let $Y^i=\argmin~ g_i(x)$ and $Y_j^i=\argmin~{\bf A}_{ji}h_j(x)$ for $j=1, \ldots, k$. Since for each $j\in \{1, \ldots, k\}$ such that ${\bf A}_{ji}=0$, $\argmin~{\bf A}_{ji}h_j(x)=0$ is a constant function over the whole real line, it holds that $Y_j^i=\reals$. Since positive constant scaling does not affect the optimal set of a function, for each $j\in \{1, \ldots, k\}$ such that ${\bf A}_{ji}>0$, it holds that $Y_j^i=X_j$.  In addition, because $h_1(x), \ldots, h_k(x)$ are solution redundant functions, i.e., $\cap_{j=1}^k X_j\not=\O$,  functions ${\bf A}_{1i}h_1(x), \ldots, {\bf A}_{ki}h_k(x)$ are also solution redundant.  By Proposition \ref{p2} we have
$$Y^i=\cap_{j: {\bf A}_{ji}>0}X_j\supseteq \cap_{j=1}^k X_j=X, ~ \text{for all}~ i\in \calV.$$
Let $x^{\prime}\in X$. Define $g_j^*$ as the optimal value of function $g_j(\cdot)$ for each $j\in \calV$. We have
\begin{align}
\label{almost monotone}
\nonumber
|y(t+1)-x^{\prime}|^2&\le |y(t)-x^{\prime}|^2+4L\alpha(t) \sum_{j=1}^{n-\phi} \pi_j(t+1)|y(t)-x_j(t)|\\
\nonumber
&\quad-2\alpha(t) \sum_{j=1}^{n-\phi} \pi_j(t+1)\pth{g_j\pth{y(t)}-g_j(x^{\prime})}+\alpha^2(t)(n-\phi)L^2\\
\nonumber
&\overset{(a)}{=}|y(t)-x^{\prime}|^2+4L\alpha(t) \sum_{j=1}^{n-\phi} \pi_j(t+1)\left|y(t)-x_j(t)\right|\\
&\quad-2\alpha(t) \sum_{j=1}^{n-\phi} \pi_j(t+1)\pth{g_j\pth{y(t)}-g_j^*}+\alpha^2(t)(n-\phi)L^2.
\end{align}
Equality $(a)$ holds because of $x^{\prime}\in X\subseteq Y^j$ for each $j\in \calV$, then $g_j(x^{\prime})=g_j^*$.\\

For each $t\ge 0$, define
\begin{align*}
a_t&=|y(t)-x^{\prime}|^2,\\
b_t&= 2\alpha(t) \sum_{j=1}^{n-\phi} \pi_j(t+1)\pth{g_j(y(t))-g_j^*},\\
c_t&=4L\alpha(t) \sum_{j=1}^{n-\phi} \pi_j(t+1)|y(t)-x_j(t)|+\alpha^2(t)(n-\phi)L^2.
\end{align*}
It is easy to see that $a_t\ge 0$ and $c_t\ge 0$ for each $t$.
Since $g_j^*$ is the optimal value of function $g_j(\cdot)$, it holds that $b_t\ge 0$ for each $t$. Thus, $\{a_t\}_{t=0}^{\infty}, \{b_t\}_{t=0}^{\infty}$ and $\{c_t\}_{t=0}^{\infty}$ are three non-negative sequences. By (\ref{almost monotone}), it holds that
$$a_{t+1}\le a_t-b_t+c_t ~~~~\text{for each}~t\ge 0.$$

By Lemma \ref{c1}, it holds that
$$4L\sum_{t=0}^{\infty}\alpha(t) \sum_{j=1}^{n-\phi} \pi_j(t+1)\,|y(t)-x_j(t)|<\infty.$$
In addition, since $\sum_{t=0}^{\infty}\alpha^2(t)<\infty$, it holds that
$$(n-\phi)L^2\sum_{t=0}^{\infty}\alpha^2(t)<\infty.$$
Thus, we get
\begin{align}
\nonumber
\sum_{t=0}^{\infty} c_t&=\sum_{t=0}^{\infty}\pth{4L\alpha(t) \sum_{j=1}^{n-\phi} \pi_j(t+1)\,|y(t)-x_j(t)|+\alpha^2(t)(n-\phi)L^2}\\
\nonumber
&=4L\sum_{t=0}^{\infty}\pth{\alpha(t) \sum_{j=1}^{n-\phi} \pi_j(t+1)\,|y(t)-x_j(t)|}+(n-\phi)L^2\sum_{t=0}^{\infty}\alpha^2(t)\\
&< \infty.
\label{almost dominant}
\end{align}
Therefore, applying Lemma \ref{stochatic convergence} to the sequences $\{a_t\}_{t=0}^{\infty}, \{b_t\}_{t=0}^{\infty}$ and $\{c_t\}_{t=0}^{\infty}$, we have that for any $x^{\prime}\in X$,
$a_t=|y(t)-x^{\prime}|$ converges, and
\begin{align}
\label{finite b}
\sum_{t=0}^{\infty}b_t=\sum_{t=0}^{\infty}\alpha(t) \sum_{j=1}^{n-\phi}\pi_j(t+1)\pth{g_j(y(t))-g_j^*}<\infty.
\end{align}
Since $|y(t)-x^{\prime}|$ converges for any fixed $x^{\prime}\in X$, by definition of sequence convergence and the dynamic of $y(t)$ in (\ref{yupdate}), it is easy to see that $y(t)$ also converges. Let $\lim_{t\diverge}y(t)=y$. Next we show that $y\in X$.\\

By continuity of $h(\cdot)$, we have $$\lim_{t\diverge}h\pth{y(t)}=h\pth{\lim_{t\diverge}y(t)}=h(y).$$
 Equivalently, for any $\epsilon>0$, there exists $T$ such that for any $t\ge T$, it holds that
\begin{align*}
|h(y(t))-h(y)|<\epsilon.
\end{align*}
Suppose $y\notin X$, then $h(y)-h^*>0$. Let $\epsilon_0=\frac{h(y)-h^*}{2}$. Then there exists $T_0$ such that for any $t\ge T_0$, it holds that
\begin{align}
\label{continu y}
|h(y(t))-h(y)|<\epsilon_0.
\end{align}

\vskip 1.5\baselineskip
Let $\calI_{t+1}\subseteq \calV-\calF$ be the set of indices such that for each $j\in \calI_{t+1}$, $\pi_j(t+1)\ge \beta^{\nu}$. As $G(\calV, \calE)$ satisfies Condition 2,  $|\calI_{t+1}|\ge \max \{k^{\prime}, f+1\}$.  Since $g_j(y(t))-g_j^*\ge 0$ for all $j$,
then
\begin{align}
\label{essential function}
\nonumber
\sum_{j=1}^{n-\phi} \pi_j(t+1)\pth{g_j(y(t))-g_j^*}&\ge \sum_{j\in \calI_{t+1}} \pi_j(t+1)\pth{g_j(y(t))-g_j^*}\\
\nonumber
&\ge \beta^{\nu}\sum_{j\in \calI_{t+1}}\pth{g_j(y(t))-g_j^*}\\
\nonumber
&=\beta^{\nu}\sum_{j\in \calI_{t+1}}\sum_{i=1}^{k}{\bf A}_{ij}\pth{h_i(y(t))-h_i^*}\\
\nonumber
&=\beta^{\nu}\sum_{i=1}^{k}\pth{\sum_{j\in \calI_{t+1}}{\bf A}_{ij}}\pth{h_i(y(t))-h_i^*}\\
&\ge k\beta^{\nu}C_2 \pth{h(y(t))-h^*},
\end{align}
where
\begin{align*}
C_2=\min_{\calI\subseteq \calV: \,  |\calI|\ge \max \{k^{\prime}, f+1\}} \sum_{i\in \calI} {\bf A}_{ij},
\end{align*}
and the last inequality follows from the fact that $h_i(y(t))-h_i^*\ge 0$. In addition, as $sp({\bf A})=k^{\prime}$, then $\sum_{i\in \calI} {\bf A}_{ij}>0$ for every $\calI\subseteq \calV: \,  |\calI|\ge \max \{k^{\prime}, f+1\}$. Since $\bf A$ is finite, $C_2$ is well-defined and $C_2>0$.
The relation (\ref{essential function}) can be further bounded as follows.
\begin{align*}
\sum_{t=0}^{\infty}\alpha(t)\sum_{j=1}^{n-\phi} \pi_j(t+1)\pth{g_j(y(t))-g_j^*}&\ge \sum_{t=0}^{\infty}\alpha(t) k\beta^{\nu}C_2 \pth{h(y(t))-h^*}\\
&\ge \sum_{t=T_0}^{\infty}\alpha(t) k\beta^{\nu}C_2 \pth{h(y(t))-h^*}~\text{as}~h(y(t))-h^*\ge 0, \forall t\\
&\ge \sum_{t=T_0}^{\infty}\alpha(t) k\beta^{\nu}C_2 \pth{h(y)-h^*-\epsilon_0}~~~~~~\text{by (\ref{continu y})}\\
&=\sum_{t=T_0}^{\infty}\alpha(t) k\beta^{\nu}C_2 \epsilon_0~~~~~~\text{since}~\epsilon_0=\frac{h(y)-h^*}{2} \\
&=\infty ~~~~~\text{since}~\sum_{t=0}^{\infty}\alpha(t)=\infty.
\end{align*}
This contradicts the fact that (\ref{finite b}) holds. Thus, the assumption that $y\notin X$ does not hold, and $y\in X$. \\

Therefore, we conclude that $y\in X$.
That is, there exists $x^*\in X$ such that $y=x^*$ and
\begin{align}
\lim_{t\diverge} |y(t)-x^*|=0.
\label{converge y}
\end{align}

By triangle inequality, we have
$$|x_i(t)-x^*|\le |x_i(t)-y(t)|+|y(t)-x^*|.$$

Then, by Lemma \ref{consensus} and (\ref{converge y}), we have
$$\limsup_{t\diverge} |x_i(t)-x^*|\le \lim_{t\diverge} |x_i(t)-y(t)|+\lim_{t\diverge} |y(t)-x^*|=0.$$
On the other hand, $\liminf_{t\diverge} |x_i(t)-x^*|\ge 0.$
Thus, limit of $|x_i(t)-x^*|$ exists and
$$\lim_{t\diverge} |x_i(t)-x^*|=0,$$
proving Theorem \ref{ConAlgorithm1}.

\raggedleft $\square$
\end{proof}

\end{document}